\def\mathbi#1{\textbf{\em #1}}
\newtheorem{definition}{Definition} 
\newtheorem{theorem}{Theorem}
\newtheorem{proof}{Proof}
\begin{document}

\title{Slightly Altruistic Nash Equilibrium for Multi-agent Pursuit-Evasion Games With Input Constraints}

\author{Dongting Li, dalton.dtli@gmail.com
}



\maketitle

\begin{abstract}
In this article, a class of input-constrained multi-agent pursuit-evasion (MPE) games with time-energy optimality is formulated. 
By introducing the cooperative behaviors for the players in the same team, the cooperative-noncooperative MPE games framework is provided.
Based on this, the optimal control policies for the index functions with the proposed altruism terms (including the team-interested and teammates-interested terms) constitute the slightly altruistic global Nash equilibrium.
The critic-actor reinforcement learning approach is adopted to solve the equilibrium.
Besides, the capture condition for each pursuer is provided, especially in the case of multiple players with poor mobility capture fewer players with strong mobility.
To guarantee the capture conditions being satisfied in real-time, a rolling horizon target selection method with a novel bi-layer topology is provided, in which the first-layer contains the communication information, the second-layer of game topology is used to guarantee the capture conditions.
The rolling horizon strategy leads to the time element is also embedded in the bi-layer topology.
The numerical simulations demonstrate the validity of the proposed theory.
\end{abstract}

\begin{IEEEkeywords}
Multi-agent pursuit-evasion games, slight altruism, rolling horizon target selection, bi-layer topology, input constraint, time-energy optimal
\end{IEEEkeywords}

\section{Introduction}

\textbf{\emph{Contributions:}} The main contributions of this article includes:
\begin{enumerate}
\def\labelenumi{\arabic{enumi})}
\item[i)]
  \textbf{Time-energy optimal MPE games with input constraint:}\\
  In this paper, a class of input-constrained multi-agent pursuit-evasion games is formulated, with the time-energy optimality is considered in the finite-time case.
  Due to the input saturations, the coupled HJI equations in the optimal control framework are solved by the critic-actor reinforcement learning approach;
  
\item[ii)]
  \textbf{Slightly altruistic Nash equilibrium:}\\
  Consider only the noncooperative MPE games are studied in existing papers, i.e., the lack of cooperative behaviors for the players in the same team.
  A cooperative-noncooperative framework is provided, by introducing the altruism terms including team-interested and teammates-interested terms, then a slightly altruistic global Nash equilibrium is proposed;
\item[iii)]
  \textbf{Bi-layer topology-based rolling horizon target selection:}\\
  To guarantee the capture, especially in the cases that multiple players with poor mobility capture fewer players with strong mobility, a rolling horizon target selection approach via a bi-layer topology is presented.
  The first layer contains the communication information, and the weights in the second layer, a.k.a, game topology are used to guarantee the capture via the proposed adjusting law.
  Besides, the time element is also considered in the game topology due to the rolling horizon strategy, i.e., with time converge to the terminal, each player will puts more efforts on pursuing the evaders easier to capture.
\end{enumerate}

\textbf{\emph{Organization:}} In this paper, the preliminaries including the graph theory and the dynamics are presented in Section II.
The proposed cooperative-noncooperative framework, as well as the index functions with slight altruism, and the slightly altruistic global Nash equilibrium are provided in Section III.
Section IV adopts the critic-actor reinforcement learning method to solve the equilibrium.
In Section V, the bi-layer topology, and the finite-time capture analysis are presented.
Section VI summarizes this paper.

\textbf{\emph{Notation:}} Throughout this paper, the superscript ${\cdot}_i^p$ denotes the $i$-th pursuer, ${\cdot}_j^e$ denotes the $j$-th evader, ${\cdot}_{ij}^{pe}$ denotes the cross relation between pursuer $i$ and evader $j$, ${\cdot}_{\mathcal{N}_p-i}^p$ denotes the neighboring pursuers of pursuer $i$, ${\cdot}_{\mathcal{N}_{pe}-i}^e$ denotes the neighboring evaders of pursuer $i$, ${\cdot}_{\mathcal{N}_{e}-j}^e$ denotes the neighboring evaders of evader $j$.  $\otimes$ denotes the Kronecker product. $I_{n}$ denotes the identify matrix of $n$-dimension, $0_{n}$ denotes the zero matrix of $n$-dimension. $E^n_{12} \in \mathbb{R}^{2n \times n}$ denotes $[I^{\mathrm{T}}_n, 0^{\mathrm{T}}_n]^{\mathrm{T}}$, $E^n_{22} \in \mathbb{R}^{2n \times n}$ denotes $[0^{\mathrm{T}}_n, I^{\mathrm{T}}_n]^{\mathrm{T}}$.

\section{Preliminaries}

\subsection{Graph Theory}
\label{Graph Theory}
Assume the multiple spacecraft PE games involve $N$ pursuers and $M$ evaders ($N \geq M$). The graph among the $N$ pursuers is defined by $\mathcal{G}_p = (V_p, E_p)$, where $V_p = \{ v_{p1}, \cdots, v_{pN}\}$ represents the nodes set, and $E_p \subseteq V_p \times V_p$ denotes the edges set.
Define $c^{p}_{ij}$ as the weight of  connection between the $i$-th and the $j$-th pursuer, note that $c^{p}_{ij} = 0$ if $(v_{pi}, v_{pj}) \not\in E_p$.
Define the $i$-th pursuer's in-degree as $d_i^p = \sum_{j=1}^{N} c^{p}_{ij}$, and the in-degree matrix as $D_p = \mathrm{diag} \{ c_i^p \}$.
Define the weighted adjacency matrix as $C_p = [ c^{p}_{ij} ]$, then the graph Laplacian matrix can be represented by $L_p = D_p - A_p$.

Similarly, the graph of the $M$ evaders is defined as $\mathcal{G}_e = (V_e, E_e)$, with the $V_e = \{ v_{e1}, \cdots, v_{eM}\}$ denotes the nodes set, $E_e \subseteq V_e \times V_e$ represents the edges set.
Define $c^{e}_{kl}$ as the weight of connection between the $k$-th and the $l$-th evader.
Define $d_k^e = \sum_{l=1}^{M} c^{e}_{kl}$ as the $k$-th evader's in-degree, $D_e = \mathrm{diag} \{ d_k^e \}$ as the in-degree matrix, $C_e = [ c^{p}_{kl} ]$ as the weighted adjacency matrix, $L_e = D_e - A_e$ as the graph Laplacian matrix.

Define $\mathcal{G}_{pe} = (V_{pe}, E_{pe})$ as the global graph of representing the interactions among all players, $c^{pe}_{ik}$ as the weight of the $i$-th pursuer relative to the $k$-th evader, $c^{ep}_{ki}$ as the weight of the the $k$-th evader relative to the $i$-th pursuer, note that $c^{pe}_{ik}$ is not always equal to $c^{ep}_{ki}$.
Define the in-degree of the $i$-th pursuer in $G_{pe}$ as $d_i^{pe} = \sum_{k=1}^{M} c^{pe}_{ik}$, the $k$-th evader in $G_{pe}$ as $d_k^{ep} = \sum_{i=1}^{N} c^{ep}_{ki}$.
Define $D_{pe} = \mathrm{diag} \{ d_i^{pe} \}$ and $D_{ep} = \mathrm{diag} \{ d_k^{ep} \}$ as the in-degree matrix, $C_{pe} = [ c^{pe}_{ik} ]$ and  $C_{ep} = [ c^{ep}_{ki} ]$ as the weighted adjacency matrix.

\subsection{Dynamical model}
Consider a class of input-constrained MPE games of $N$ pursuers versus $M$ evaders ($N\geq M$), the dynamics of pursuer $i$ described by a linear differential equation is given by
\begin{equation}{\label{eq1}}
\dot{\mathbi{x}}^{p}_{i} = A \mathbi{x}^{p}_{i} + B_i^{p} \mathbi{u}_{i},\  i \in \mathcal{N}^{p}, 
\end{equation}
where $\mathcal{N}^{p} \triangleq \{1, \cdots, N\}$ denotes the set of pursuers, $\mathbi{x}^{p}_{i} \in \mathbb{R}^{n}$ denotes the position state. $\mathbi{u}_i \in \Omega_{\mathbi{u}}$ is the control input vector with $\Omega_{\mathbi{u}}=\{ \mathbi{u} | \mathbi{u} \in \mathbb{R}^{m},\ |u_j| \leq u_{\max},\ j = 1,\cdots,m \}$ represents the input set satisfying the saturation constraint, where $u_j$ is the $j$-th element of $\mathbi{u}$ and $u_{\max}$ denotes the upper bound. Time-invariant matrices $A \in \mathbb{R}^{n \times n}$ and $B_i^{p} \in \mathbb{R}^{n \times m}$ are the state matrix and control input matrix, $[ A, B^p_i ]$ is assumed controllable.

Consider the following dynamics of the $j$-th evader, i.e.,
\begin{equation}{\label{eq2}}
\dot{\mathbi{x}}^{e}_{j} = A \mathbi{x}^{e}_{j} + B_{j}^{e} \mathbi{v}_{j}, \ j \in \mathcal{N}^{e}, 
\end{equation}
where $\mathcal{N}^{e} \triangleq \{1, \cdots, M\}$ denotes the set of evaders, the state vector $\mathbi{x}^{e}_{j} \in \mathbb{R}^{n}$, the input vector $\mathbi{v}_{j} \in \Omega_{\mathbi{v}}$, the saturated input set $\Omega_{\mathbi{v}}$, the saturation bound $v_{\max}$, and the input matrix $B_j^{e} \in \mathbb{R}^{n \times m}$ are defined similarly with \eqref{eq1}, respectively.

Based on the graph $G_p$ and $G_{pe}$, define the local error of pursuer $i$ relative to his neighbors as
\begin{equation}{\label{eq3}}
\bm{\delta}^{p}_{i} = \left[
\sum\limits_{k \in \mathcal{N}^p_{-i}} c^{p}_{ik} \left( \mathbi{x}^{p}_{i} - \mathbi{x}^{p}_{k} \right)^{\mathrm{T}},\  \sum\limits_{j \in \mathcal{N}^{e}} c^{pe}_{ij} \left( \mathbi{x}^{p}_{i} - \mathbi{x}^{e}_{j} \right)^{\mathrm{T}}
\right]^{\mathrm{T}} \in \mathbb{R}^{2n}
\end{equation}
where $ \mathcal{N}^p_{-i}$ is the subset of $ \mathcal{N}^p$ that excludes pursuer $i$.
Take the time derivative of $\bm{\delta}^{p}_i$ and combine with \eqref{eq1} and \eqref{eq2}, yields,
\begin{equation}{\label{eq4}}
 \begin{aligned}
\dot{\bm{\delta}}^{p}_{i} 
&=  \left[ \begin{array}{cc}
     A &0 \\
     0 & A
   \end{array} \right] {\bm{\delta}}^{p}_{i} + 
   \left[ \begin{array}{c}
     a_i^{p}B^{p}_i  \\
     a_i^{pe}B^{p}_i
   \end{array} \right] \mathbi{u}_{i} -
    \left[ \begin{array}{c}
    \sum\limits_{k \in \mathcal{N}_{-i}^p} c^{p}_{ik} B_k^p \mathbi{u}_{k}  \\
     \sum\limits_{j \in \mathcal{N}^{e}} c^{pe}_{ij} B_j^e \mathbi{v}_{j}
   \end{array} \right]
   \\
&\triangleq  \bar{A} {\bm{\delta}}^{p}_{i}  + \bar{B}^{p}_{i} \mathbi{u}_{i} -  \sum\limits_{k \in \mathcal{N}_{-i}^p} c^{p}_{ik} E^{n}_{12} B_k^p \mathbi{u}_{k} -  \sum\limits_{j \in \mathcal{N}^{e}} c^{pe}_{ij} E^{n}_{22} B_j^e \mathbi{v}_{j}.
\end{aligned}
\end{equation}

Similarly, define the local error of the $j$-th evader as
\begin{equation}{\label{eq5}}
\bm{\delta}^{e}_{j} = \left[
\sum\limits_{l \in \mathcal{N}^e_{-j}} c^{e}_{jl} \left( \mathbi{x}^{e}_{j} - \mathbi{x}^{e}_{l} \right)^{\mathrm{T}},\  \sum\limits_{i \in \mathcal{N}^{p}} c^{ep}_{ji} \left( \mathbi{x}^{e}_{j} - \mathbi{x}^{p}_{i} \right)^{\mathrm{T}}
\right]^{\mathrm{T}} \in \mathbb{R}^{2n}.
\end{equation}
Take the time derivative of $\bm{\delta}^{e}_j$ and combine with \eqref{eq1} and \eqref{eq2}, yields,
\begin{equation}{\label{eq6}}
 \begin{aligned}
\dot{\bm{\delta}}^{e}_{j} 
&=  \left[ \begin{array}{cc}
     A &0 \\
     0 & A
   \end{array} \right] {\bm{\delta}}^{e}_{j} + 
   \left[ \begin{array}{c}
     a_j^{e}B^{e}_j  \\
     a_j^{ep}B^{e}_j
   \end{array} \right] \mathbi{v}_{j} -
    \left[ \begin{array}{c}
    \sum\limits_{l \in \mathcal{N}^e_{-j}} c^{e}_{jl} B_l^e \mathbi{v}_{l}  \\
     \sum\limits_{i \in \mathcal{N}^{p}} c^{ep}_{ji} B_i^p \mathbi{u}_{i}
   \end{array} \right]
   \\
&\triangleq  \bar{A} {\bm{\delta}}^{e}_{j}  + \bar{B}^{e}_{j} \mathbi{v}_{j} -  \sum\limits_{l \in \mathcal{N}^e_{-j}} c^{e}_{jl} E^{n}_{12} B_l^e \mathbi{v}_{l} -  \sum\limits_{i \in \mathcal{N}^{p}} c^{ep}_{ji} E^{n}_{22} B_i^p \mathbi{u}_{i}.
\end{aligned}
\end{equation}
Compared with the literature, the local error relative with the cooperative teammates and the adversarial opponents are defined by augmentation rather than direct summation. 
With such definition, the relevant error with the teammates and the opponents of each player are decoupled in his error state, then the neighboring team interest could be represented by his individual interest, which is introduced in the next result.

Consider the center of the $i$-th pursuer's neighboring cooperative team, which is given by
\begin{equation}{\label{eq7}}
\bm{\zeta}_i^p = \frac{ \mathbi{x}^{p}_{i} + \sum\nolimits_{k \in \mathcal{N}_{-i}^p} c^{p}_{ik}\mathbi{x}^{p}_{k}  }{ 1+ a_i^p} \equiv \mathbi{x}^{p}_{i} - \frac{  {E^{n}_{12}}^{\mathrm{T}}  \bm{\delta}^{p}_{i} }{1+ a_i^p} \in \mathbb{R}^{n} .
\end{equation}
Similarly, the center of his neighboring adversarial team is denoted by
\begin{equation}{\label{eq8}}
\bm{\zeta}_i^{pe} = \frac{ \mathbi{x}^{p}_{i} + \sum\nolimits_{j \in \mathcal{N}^e} c^{pe}_{ij} \mathbi{x}^{e}_{j}  }{ 1+ a_i^{pe}} \equiv \mathbi{x}^{p}_{i} - \frac{  {E^{n}_{22}}^{\mathrm{T}}  \bm{\delta}^{p}_{i} }{1+ a_i^{pe}}\in \mathbb{R}^{n}.
\end{equation}
Based on \eqref{eq7} and \eqref{eq8}, the error between the center of the pursuer $i$'s cooperative team and the center of his adversarial team is derived by $\bm{\zeta}_i^p - \bm{\zeta}_i^{pe} \equiv \left(  \frac{{E^{n}_{22}}^{\mathrm{T}}}  {1+ a_i^{pe}} -  \frac{  {E^{n}_{12}}^{\mathrm{T}}   }{1+ a_i^{p}} \right)  \bm{\delta}^{p}_{i} $. 

Consider the center of the $j$-th evader's neighboring cooperative team which is described by
\begin{equation}{\label{eq9}}
\bm{\zeta}_j^e = \frac{ \mathbi{x}^{e}_{j} + \sum\nolimits_{l \in \mathcal{N}_{-j}^e} c^{e}_{jl}\mathbi{x}^{e}_{l}  }{ 1+ a_j^e} \equiv \mathbi{x}^{e}_{j} - \frac{  {E^{n}_{12}}^{\mathrm{T}}  \bm{\delta}^{e}_{j} }{1+ a_j^e}\in \mathbb{R}^{n}.
\end{equation}
Similarly, the center of his neighboring adversarial team is denoted by
\begin{equation}{\label{eq10}}
\bm{\zeta}_j^{ep} = \frac{ \mathbi{x}^{e}_{j} + \sum\nolimits_{i \in \mathcal{N}^p} c^{ep}_{ji} \mathbi{x}^{p}_{i}  }{ 1+ a_j^{ep}} \equiv \mathbi{x}^{e}_{j} - \frac{  {E^{n}_{22}}^{\mathrm{T}}  \bm{\delta}^{e}_{j} }{1+ a_j^{ep}}\in \mathbb{R}^{n}.
\end{equation}
Similarly, the error between the center of evader $j$'s cooperative team and the center of his adversarial team is given by $\bm{\zeta}_j^e - \bm{\zeta}_j^{ep} \equiv \left(  \frac{{E^{n}_{22}}^{\mathrm{T}}}  {1+ a_j^{ep}} -  \frac{  {E^{n}_{12}}^{\mathrm{T}}   }{1+ a_j^{e}} \right)  \bm{\delta}^{e}_{j} $.

\section{Slightly Altruistic Nash Equilibrium for Input-constrained MPE Games}
Consider a class of input-constrained MPE games of $N$ pursuers versus $M$ evaders, the objective of each evading player is to \emph{maximize} the distance relative with his neighboring pursuers, under the input saturation constraint.
Conversely, each pursuing evader needs to \emph{minimize} the distance relative with his neighboring evaders, under the input saturation constraint.
During the games, each player can choose to whether keeping the cohesion with his cooperative team, i.e., \emph{minimizing} the distance relative with his teammates.
Note that, based on the graphical game theory, such control objective leads to each player maximizing or minimizing his distance relative with the center of gravity of his neighbors rather than a specific player.
Thus, a strategy is provided in Section \ref{sec5} to dynamically adjust the weight of graph to achieve a target selection.

The following results provide novel constructions of the player's objective via index functions.
\subsection{Time-energy optimal index with slight altruism}
Since the studies of MPE games in the literature only consider the non-cooperative relations, i.e., each agent plays a non-cooperative game with his opponents and teammates simultaneously.
In fact, it is necessary to introduce cooperative behaviors for the players in the same team, which contributes to achieve the overall objective of MPE games. 

Based on this, a novel cooperative-noncooperative index with slight altruism is provided in this paper, i.e., not only the non-cooperative self-interested term, but also the cooperative team-interested term and the teammates-interested term are also involved in the index. 
In this paper, the team-interested and teammates-interested terms are called the altruism term.

To realize the time-energy optimality and slight altruism, define the index function for the $i$-th pursuer as follows
\begin{equation}{\label{eq11}}
\begin{aligned}
J_i^{p} = &
 \int_{0}^{t_f}
 \Big[ 
  \varXi_{\rm{sf}} ({\bm{\delta}}^{p}_{i}) + \varXi_{\rm{tm}} ({\bm{\delta}}^{p}_{i}) + \varXi_{\rm{te}} ( {\bm{\delta}}_{i}^{p}, {\bm{\delta}}_{\mathcal{G}_p-i}^{p}, {\bm{\delta}}^{e} ) \\
  &+ U(\mathbi{u}_i) + \sum\limits_{k \in \mathcal{N}_{-i}^p} c^{p}_{ik} U(\mathbi{u}_k) - \sum\limits_{j \in \mathcal{N}^e} c^{pe}_{ij} U(\mathbi{v}_j) + \rho_i
\Big] \mathrm{d}t\\
&+  \bar{\varXi}_{\rm{sf}} ({\bm{\delta}}^{p}_{i}(t_f)) + \bar{\varXi}_{\rm{tm}} ({\bm{\delta}}^{p}_{i} (t_f) ) + \bar{\varXi}_{\rm{te}} ( {\bm{\delta}}^{p}(t_f), {\bm{\delta}}^{e} (t_f)),
\end{aligned}
\end{equation}
where $\varXi_{\rm{sf}}$, $\varXi_{\rm{tm}}$, and $ \varXi_{\rm{te}}$ denote the self-interested term, the team-interested term, and the teammates-interested term, respectively, $\bar{\varXi}_{\rm{sf}}$, $\bar{\varXi}_{\rm{tm}}$, and $ \bar{\varXi}_{\rm{te}}$ represent their corresponding terminal terms. ${\bm{\delta}}^{p}$ and ${\bm{\delta}}^{e}$ denote the local error set of all pursuers and evaders, ${\bm{\delta}}_{\mathcal{G}_p-i}^{p}$ is the subset of ${\bm{\delta}}^p$ that excludes pursuer $i$. Motivated by \cite{ref1,ref2}, the energy term with input constraint is defined as $U(\mathbi{u}_i) = 2 \int_{0}^{\mathbi{u}_i} \left( u_{\max} \tanh^{-1} \left( {\bm{\nu}} / {u_{\max}} \right) \right)^{\mathrm{T}} R^p_i \ \mathrm{d} \bm{\nu}$,  where $\bm{\nu} \in \mathbb{R}^{m} $, $R^p_i$ is a diagonal positive-definite matrix, and the definition of $U(\mathbi{u}_k)$ and $U(\mathbi{v}_j)$ are similar. $\rho_i$ is a positive parameter to adjust the weight of time optimality.

{\textbf{Self-interested term:}} The self-interested term of pursuer $i$ is defined by
\begin{equation}{\label{eq12}}
\varXi_{\rm{sf}} ({\bm{\delta}}^{p}_{i}) = {\bm{\delta}^p_i}^{\mathrm{T}} Q^{p}_{i} {\bm{\delta}^{p}_i}
\end{equation}
where $Q_i^p = \mathrm{diag}\{\Lambda_i^p, \Lambda_i^{pe}\} \in \mathbb{R}^{2n \times 2n}$, $\Lambda_i^p \in \mathbb{R}^{n \times n} $ and $\Lambda_i^{pe} \in \mathbb{R}^{n \times n} $ are positive-definite weight matrices. 

{\textbf{Team-interested term:}} The team-interested term of pursuer $i$ is defined by
\begin{equation}{\label{eq13}}
\begin{aligned}
\varXi_{\rm{tm}} ({\bm{\delta}}^{p}_{i}) &=  { \left( \bm{\zeta}_i^p - \bm{\zeta}_i^{pe} \right)}^{\mathrm{T}} \Gamma_i^p { \left( \bm{\zeta}_i^p - \bm{\zeta}_i^{pe} \right)} \\
& =  { \left( \frac{  {E^{n}_{22}}\!^{\mathrm{T}}  \bm{\delta}^{p}_{i} }{1+ a_i^{pe}} -\frac{  {E^{n}_{12}}\!^{\mathrm{T}}  \bm{\delta}^{p}_{i} }{1+ a_i^{p}} \right)}^{\mathrm{T}} \!\! \Gamma_i^p { \left( \frac{  {E^{n}_{22}}\!^{\mathrm{T}}  \bm{\delta}^{p}_{i} }{1+ a_i^{pe}} -\frac{  {E^{n}_{12}}\!^{\mathrm{T}}\bm{\delta}^{p}_{i} }{1+ a_i^{p}} \right)}
\end{aligned}
\end{equation}
where $\Gamma_i^p$ represents the weight matrix, one can adjust the the weight of altruism through the selection of $\Gamma_i^p$. \eqref{eq13} represents the pursuer $i$ pays efforts to control the center of his neighboring cooperative team to pursuit the center of his neighboring adversarial team.

{\textbf{Teammates-interested term:}} The teammates-interested term of pursuer $i$ is described by
\begin{equation}{\label{eq14}}
\begin{aligned}
\varXi_{\rm{te}} ( {\bm{\delta}}^{p}, {\bm{\delta}}^{e} ) =\ &
    {{\bm{\delta}}^{p}}^{\mathrm{T}} (C_p \otimes I_{2n}) \breve{Q}_i^p {\bm{\delta}}^{p}\\
    & +
    \left[ \begin{array}{c}
     { \bm{\delta}_i^{p} } \\
     {{\bm{\delta}}^{e}} 
   \end{array} \right]^{\mathrm{T}}
    \left[ \begin{array}{cc}
     I_{2n} & \\
     & C_{pe} \otimes I_{2n}
   \end{array} \right]
   \breve{Q}_i^{pe}
  \left[ \begin{array}{c}
     { \bm{\delta}_i^{p} } \\
     {{\bm{\delta}}^{e}} 
   \end{array} \right]
\end{aligned}
\end{equation}
where the matrix $\breve{Q}_i^{p} \in \mathbb{R}^{2nN \times 2nN} $ has the following form
\begin{equation}{\label{eq15}}
\breve{Q}_i^{p} = \left[ \begin{array}{ccccc}
     \mu^p_i Q^{p}_{1} & &Q_{1i}^p  &  &  \\
                  & \ddots &\vdots   &  & \\
     Q_{i1}^p & \cdots & \mu_i^p Q_i^p & \cdots & Q_{iN}^p \\
      & &\vdots & \ddots & \\
      & &Q_{Ni}^p & & \mu_i^p Q^{p}_{N}
   \end{array} \right],
 \end{equation}
and $\breve{Q}_i^{pe} \in \mathbb{R}^{2n(M+1) \times 2n(M+1)}$ is defined by
\begin{equation}{\label{eq16}}
\breve{Q}_i^{pe} = \left[ \begin{array}{ccccc}
     \eta_i^p Q^{p}_{i} &  Q^{pe}_{i1} & Q^{pe}_{i2} & \cdots &  Q^{pe}_{iM} \\
      Q^{ep}_{1i}    &  \eta_i^p Q^{e}_{1}  &   &  & \\
     Q^{ep}_{2i} &  & \eta_i^p Q_2^e &  &  \\
      \vdots& & & \ddots & \\
     Q^{ep}_{Mi} & & & & \eta_i^p Q^{e}_{M}
   \end{array} \right],  
\end{equation}
where $\mu$ and $\eta$ are the parameters to adjust the weight of altruism, $Q^p_{ik}$ satisfying $\bm{\delta}_i^p Q^p_{ik} \bm{\delta}_k^p > 0$ denotes the cross weight matrix of the local errors between pursuer $i$ and pursuer $k$, $Q^{pe}_{ij}$($Q^{ep}_{ji}$) satisfying $\bm{\delta}_i^p Q^{pe}_{ij} \bm{\delta}_j^e > 0$($\bm{\delta}_j^e Q^{ep}_{ji} \bm{\delta}_i^p >0$) denote the cross weight matrices of the local errors between the pursuer $i$(evader $j$) and evader $j$(pursuer $i$).

By neglecting the terms independent with ${\bm{\delta}}_i^{p}$ and combining with \eqref{eq12}, \eqref{eq13}, \eqref{eq14}, then \eqref{eq11} is converted to
\begin{equation}{\label{eq17}}
\begin{aligned}
J_i^{p} \!=\! & 
 \int_{0}^{t_f} \!\!
 \left[ {\bm{\delta}^p_i}^{\mathrm{T}} \tilde{Q}^{p}_{i} {\bm{\delta}^{p}_i} + {\bm{\delta}^p_i}^{\mathrm{T}} \! \left( \sum\limits_{k \in \mathcal{N}_{-i}^p} \!\! c^{p}_{ik} Q_{ik}^p {\bm{\delta}^p_k}
  + \!  \sum\limits_{j \in \mathcal{N}^e} \!\! c^{pe}_{ij} Q_{ij}^{pe}  {\bm{\delta}^{e}_j} \!\!
 \right) \right.\\
 &\left.+ U(\mathbi{u}_i) + \sum\limits_{k \in \mathcal{N}_{-i}^p} c^{p}_{ik} U(\mathbi{u}_k) - \sum\limits_{j \in \mathcal{N}^e} c^{pe}_{ij} U(\mathbi{v}_j) + \rho_i
\right] \mathrm{d}t \\
&+{\bm{\delta}^p_i}^{\mathrm{T}}(t_f) \bar{\tilde{Q}}^{p}_{i} {\bm{\delta}^{p}_i}(t_f) \\
&+{\bm{\delta}^p_i}^{\mathrm{T}}(t_f)  \left( \sum\limits_{k \in \mathcal{N}_{-i}^p} c^{p}_{ik} \bar{Q}_{ik}^p {\bm{\delta}^p_k}(t_f)
  + \sum\limits_{j \in \mathcal{N}^e} c^{pe}_{ij} \bar{Q}_{ij}^{pe}  {\bm{\delta}^{e}_j}(t_f) 
 \right)
\end{aligned}
\end{equation}
where
\begin{equation*}
\tilde{Q}^{p}_{i} \!\!=\!\! \left[ \!\!\!\! \begin{array}{cc}
    (1\!+\!\mu_i^p \!+\! \eta_i^p) \Lambda_i^p+ \frac{\Gamma_i^p}{(1+a_i^p)^2} &\!\!\!\! \frac{-\Gamma_i^p}{(1+a_i^p)(1+a_i^{pe})} \\
     \frac{-\Gamma_i^p}{(1+a_i^p)(1+a_i^{pe})} &\!\!\!\! (1\!+\!\mu_i^p \!+\! \eta_i^p) \Lambda_i^{pe} + \frac{\Gamma_i^p}{(1+a_i^{pe})^2}
   \end{array} \!\!\!\! \right]
\end{equation*}
and $\bar{\tilde{Q}}^{p}_{i}$, $\bar{Q}_{ik}^p$, and $\bar{Q}_{ij}^{pe}$ denote the terminal matrix.
Note that $\tilde{Q}^{p}_{i}$ can be guaranteed positive-definite through the selection of $\Lambda_i^p$, $\Lambda_i^{pe}$ and $\Gamma_i^p$.

Similarly, define the index function for the $j$-th evader as follows
\begin{equation}{\label{eq18}}
\begin{aligned}
J_j^{e} = &
 \int_{0}^{t_f}
 \Big[ 
  -\varXi_{\rm{sf}} ({\bm{\delta}}^{e}_{j}) - \varXi_{\rm{tm}} ({\bm{\delta}}^{e}_{j}) - \varXi_{\rm{te}} ( {\bm{\delta}}_j^{e}, {\bm{\delta}}_{\mathcal{G}_e-j}^{e}, {\bm{\delta}}^{p} ) \\
  &+ U(\mathbi{v}_j) + \sum\limits_{l \in \mathcal{N}_{-j}^e} c^{e}_{jl} U(\mathbi{v}_l) - \sum\limits_{i \in \mathcal{N}^p} c^{ep}_{ji} U(\mathbi{u}_i) + \rho_j
\Big] \mathrm{d}t\\
&-  \bar{\varXi}_{\rm{sf}} ({\bm{\delta}}^{p}_{i}(t_f)) - \bar{\varXi}_{\rm{tm}} ({\bm{\delta}}^{p}_{i} (t_f) ) - \bar{\varXi}_{\rm{te}} ( {\bm{\delta}}^{p}(t_f), {\bm{\delta}}^{e} (t_f)),
\end{aligned}
\end{equation}
where $\varXi_{\rm{sf}} ({\bm{\delta}}^{e}_{j})$, $ \varXi_{\rm{tm}} ({\bm{\delta}}^{e}_{j}) $, and $\varXi_{\rm{te}} ( {\bm{\delta}}_j^{e}, {\bm{\delta}}_{\mathcal{G}_e-j}^{e}, {\bm{\delta}}^{p} )$ are defined similarly with \eqref{eq13}, \eqref{eq14}, and \eqref{eq15}, respectively. Similar with \eqref{eq17}, \eqref{eq18} is converted to
\begin{equation}{\label{eq19}}
\begin{aligned}
J_j^{e} \!=\! & 
 \int_{0}^{t_f} \!\!
 \left[ -{\bm{\delta}^e_j}^{\mathrm{T}} \tilde{Q}^{e}_{j} {\bm{\delta}^{e}_j} \!-\! {\bm{\delta}^e_j}^{\mathrm{T}} \! \left( \sum\limits_{l \in \mathcal{N}_{-j}^e} \!\! c^{e}_{jl} Q_{jl}^e {\bm{\delta}^e_l}
  + \!\!  \sum\limits_{i \in \mathcal{N}^p} \!\! c^{ep}_{ji} Q_{ji}^{ep}  {\bm{\delta}^{p}_i} \!
 \right) \right.\\
 &\left.+ U(\mathbi{v}_j) + \sum\limits_{l \in \mathcal{N}_{-j}^e} c^{e}_{jl} U(\mathbi{v}_l) - \sum\limits_{i \in \mathcal{N}^p} c^{ep}_{ji} U(\mathbi{u}_i) + \rho_j
\right] \mathrm{d}t \\
&-{\bm{\delta}^e_j}^{\mathrm{T}}(t_f) \bar{\tilde{Q}}^{e}_{j} {\bm{\delta}^{e}_j}(t_f) \\
&-{\bm{\delta}^e_j}^{\mathrm{T}}(t_f)  \left( \sum\limits_{l \in \mathcal{N}_{-j}^e} \!\! c^{e}_{jl} Q_{jl}^e {\bm{\delta}^e_l}(t_f)
  +  \sum\limits_{i \in \mathcal{N}^p} \!\! c^{ep}_{ji} Q_{ji}^{ep}  {\bm{\delta}^{p}_i}(t_f) 
 \right),
\end{aligned}
\end{equation}
where
\begin{equation*}
\tilde{Q}^{e}_{j} \!\!=\!\! \left[ \!\!\!\! \begin{array}{cc}
    (1\!+\!\mu_j^e \!+\! \eta_j^e) \Lambda_j^e+ \frac{\Gamma_j^e}{(1+a_j^e)^2} &\!\!\!\! \frac{-\Gamma_j^e}{(1+a_j^e)(1+a_j^{ep})} \\
     \frac{-\Gamma_j^e}{(1+a_j^e)(1+a_j^{ep})} &\!\!\!\! (1\!+\!\mu_j^e \!+\! \eta_j^e) \Lambda_j^{ep} + \frac{\Gamma_j^e}{(1+a_j^{ep})^2}
   \end{array} \!\!\!\! \right]
\end{equation*}
with all the parameters and the weight matrices other than $\Lambda_j^e$ are defined similarly with \eqref{eq17}. Note that the matrix $\Lambda_j^e$ is defined negative-definite which means the $j$-th evader wishes to maximize his position relative to the neighboring pursuers, while keeping close with his teammates.

\subsection{Input-constrained MPE games formulation}
Based on the proposed index functions with slight altruism, the next result gives the definition of slightly altruistic global Nash equilibrium.
\begin{definition}[Slightly altruistic global Nash equilibrium]
\label{definition1}
Consider the MPE games of $N$ pursuers versus $M$ evaders, for $i \in \mathcal{N}^p$ and $j \in \mathcal{N}^e$, an $N+M$ tuple control policies $\{\mathbi{u}^*_1,\cdots, \mathbi{u}^*_N, \mathbi{v}^*_1, \cdots, \mathbi{v}^*_M  \}$ is called slightly altruistic global Nash equilibrium, with satisfying the saturation constraints $\mathbi{u}_i \in \Omega_{\mathbi{u}}$ and $\mathbi{v}_j \in \Omega_{\mathbi{v}}$, such that
\begin{equation}{\label{eq20}}
\begin{aligned}
 J_{i}^{p} ({ \mathbi{u}^{*}_i }, {\left. {\mathbi{u}}^{*}_{\mathcal{G}_p-i} \right.}, {\left. {\mathbi{v}}^{*}_{\mathcal{G}_{pe}-i} \right.}) &\leq J_{i}^{p} (  { \mathbi{u}_i }, {\left. {\mathbi{u}}^{*}_{\mathcal{G}_p-i} \right.}, {\left. {\mathbi{v}}^{*}_{\mathcal{G}_{pe}-i} \right.})\\
J_{j}^{e} ({ \mathbi{v}^{*}_j }, {\left. {\mathbi{v}}^{*}_{\mathcal{G}_e-j} \right.}, {\left. {\mathbi{u}}^{*}_{\mathcal{G}_{pe}-j} \right.}) &\leq J_{i}^{p} (  { \mathbi{v}_i }, {\left. {\mathbi{v}}^{*}_{\mathcal{G}_e-j} \right.}, {\left. {\mathbi{u}}^{*}_{\mathcal{G}_{pe}-j} \right.})
\end{aligned}
\end{equation}
where ${\left. {\mathbi{u}}^{*}_{\mathcal{G}_p-i} \right.}$ denotes the optimal control set for players in $\mathcal{N}^p$ escaping pursuer $i$, ${\left. {\mathbi{v}}^{*}_{\mathcal{G}_{pe}-i} \right.}$ is the optimal control set of pursuer $i$'s neighboring players  in $\mathcal{N}^e$, ${\left. {\mathbi{u}}^{*}_{\mathcal{G}_e-j} \right.}$ and ${\left. {\mathbi{u}}^{*}_{\mathcal{G}_{pe}-j} \right.}$ have similar definition.
\end{definition}

Note that the existing global Nash equilibrium is corresponding to the problem of multi-agent non-cooperative games, in which each player only considers his self interest.
In this paper, the altruism terms are introduced in the index functions, then a cooperative-noncooperative equilibrium could be studied based on the existing Nash framework.

\subsection{Solution for input-constrained MPE games}
Based on Definition \ref{definition1}, the next results give the solution for the input-constrained MPE games.
Consider the control policies of all players are coupled, which is reflect by coupled Hamilton–Jacobi–Isaacs (HJI) equations.
To obtain the form of HJI equations, each player's optimal control policy is derived firstly.

Consider the index functions \eqref{eq17} and \eqref{eq19} for pursuer $i$ and evader $j$, then the Hamiltonian functions are defined by
\begin{equation}{\label{eq21}}
\begin{aligned}
H_i^p =\ &   {\bm{\delta}^p_i}^{\mathrm{T}} \tilde{Q}^{p}_{i} {\bm{\delta}^{p}_i} + {\bm{\delta}^p_i}^{\mathrm{T}}  \left( \sum\limits_{k \in \mathcal{N}_{-i}^p} c^{p}_{ik} Q_{ik}^p {\bm{\delta}^p_k}
  + \sum\limits_{j \in \mathcal{N}^e} c^{pe}_{ij} Q_{ij}^{pe}  {\bm{\delta}^{e}_j} 
 \right) \\
 &+ U(\mathbi{u}_i) + \sum\limits_{k \in \mathcal{N}_{-i}^p} c^{p}_{ik} U(\mathbi{u}_k) - \sum\limits_{j \in \mathcal{N}^e} c^{pe}_{ij} U(\mathbi{v}_j) + \rho_i \\
                & +{\nabla{J_i^{p}}}^{\mathrm{T}}   \left[ \bar{A} {\bm{\delta}}^{p}_{i}  + \bar{B}^{p}_{i} \mathbi{u}_{i} -  \sum\limits_{k \in \mathcal{N}_{-i}^p} c^{p}_{ik} E^{n}_{12} \bar{B}_k^p \mathbi{u}_{k}\right. \\
                &\left. -  \sum\limits_{j \in \mathcal{N}^{e}} c^{pe}_{ij} E^{n}_{22} \bar{B}_j^e \mathbi{v}_{j} \right],
\end{aligned}
\end{equation}
and
\begin{equation}{\label{eq22}}
\begin{aligned}
H_j^e = &    -{\bm{\delta}^e_j}^{\mathrm{T}} \tilde{Q}^{e}_{j} {\bm{\delta}^{e}_j} - {\bm{\delta}^e_j}^{\mathrm{T}} \! \left( \sum\limits_{l \in \mathcal{N}_{-j}^e} \!\! c^{e}_{jl} Q_{jl}^e {\bm{\delta}^e_l}
  +  \sum\limits_{i \in \mathcal{N}^p}  c^{ep}_{ji} Q_{ji}^{ep}  {\bm{\delta}^{p}_i} 
 \right) \\
 &+ U(\mathbi{v}_j) + \sum\limits_{l \in \mathcal{N}_{-j}^e} c^{e}_{jl} U(\mathbi{v}_l) - \sum\limits_{i \in \mathcal{N}^p} c^{ep}_{ji} U(\mathbi{u}_i) + \rho_j
\\
                & +{\nabla{J_j^{e}}}^{\mathrm{T}}   \left[  \bar{A} {\bm{\delta}}^{e}_{j}  + \bar{B}^{e}_{j} \mathbi{v}_{j} -  \sum\limits_{l \in \mathcal{N}^e_{-j}} c^{e}_{jl} E^{n}_{12} \bar{B}_l^e \mathbi{v}_{l} \right. \\
                &\left. -  \sum\limits_{i \in \mathcal{N}^{p}} c^{ep}_{ji} E^{n}_{22} \bar{B}_i^p \mathbi{u}_{i} \right],           
\end{aligned}
\end{equation}
where ${\nabla{J_i^{p}}} \triangleq {\partial J_i^p}/{\partial \bm{\delta}_i^p}$ and ${\nabla{J_j^{e}}} \triangleq {\partial J_j^e}/{\partial \bm{\delta}_j^e}$.

By defining the corresponding optimal value functions as $V_i^{p} = \min\nolimits_{\mathbi{u}_i} J_i^p$ and $V_j^{e}= \min\nolimits_{\mathbi{v}_j} J_j^e $, 
the Hamilton-Jacobi (HJ) equations are derived as
\begin{equation}{\label{eq23}}
\begin{aligned}
- \frac{\partial V_i^p}{\partial \bm{\delta}_i^p} = \min\limits_{\mathbi{u}_i} H_i^p \left( V_i^p, \bm{\delta}_i^p, \bm{\delta}_{\mathcal{G}_p-i}^p, \bm{\delta}_{\mathcal{G}_{pe}-i}^e,  { \mathbi{u}_i },  {\mathbi{u}}^{*}_{\mathcal{G}_p-i}, {\mathbi{v}}^{*}_{\mathcal{G}_{pe}-i}  \right),\\
 i \in \mathcal{N}^p,
\end{aligned}
\end{equation}
and
\begin{equation}{\label{eq24}}
\begin{aligned}
- \frac{\partial V_j^e}{\partial \bm{\delta}_j^e} = \min\limits_{\mathbi{u}_i} H_j^e \left( V_j^e, \bm{\delta}_j^e, \bm{\delta}_{\mathcal{G}_e-j}^e, \bm{\delta}_{\mathcal{G}_{pe}-j}^p, { \mathbi{u}_i },  {\mathbi{u}}^{*}_{\mathcal{G}_p-i}, {\mathbi{v}}^{*}_{\mathcal{G}_{pe}-j}  \right), \\
 j \in \mathcal{N}^e.
\end{aligned}
\end{equation}

Then the optimal control policies are derived by
\begin{equation}{\label{eq25}}
\frac{	 \partial{{H}_i^p}  }{  \partial{ {\bm{u}_i^{*}}	} } = 0 \Rightarrow {\bm{u}_i^{*}} = -u_{\max}\tanh \left(\frac{1}{2u_{\max}} {R_i^p}^{-1} {\left.\bar{B}_i^p\right.}^{\mathrm{T}} {\nabla{V_i^{p}}} \right),
\end{equation}
and
\begin{equation}{\label{eq26}}
\frac{	 \partial{{H}_j^e}  }{  \partial{ {\bm{v}_j^{*}}	} } = 0 \Rightarrow {\bm{v}_j^{*}} = -v_{\max}\tanh \left(\frac{1}{2v_{\max}} {R_j^e}^{-1} {\left.\bar{B}_j^e\right.}^{\mathrm{T}} {\nabla{V_j^{e}}} \right).
\end{equation}

Based on \eqref{eq25} and \eqref{eq26}, it holds that
\begin{equation}{\label{eq27}}
\begin{aligned}
U({\mathbi{u}}^*_i) = &\ u_{\max} {V_i^p}^{\mathrm{T}} \bar{B}_i^p \tanh( {{D}_i^p}^{*})\\
                                 & + u^2_{\max} \bar{R}_i^p \ln \left( \underbar{\mathbi{1}} - \tanh^2({{D}_i^p}^{*})\right),
\end{aligned}
\end{equation}
and
\begin{equation}{\label{eq28}}
\begin{aligned}
U({\mathbi{v}}^*_j) = &\ v_{\max} {V_j^e}^{\mathrm{T}} \bar{B}_j^e \tanh( {{D}_j^e}^{*})\\
                                 & + v^2_{\max} \bar{R}_j^e \ln \left( \underbar{\mathbi{1}} - \tanh^2({{D}_j^e}^{*})\right),
\end{aligned}
\end{equation}
where $D_i^p = (1/2u_{\max}) {R_i^p}^{-1} {\left.\bar{B}_i^p \right.}^{\mathrm{T}} {\nabla{J_i^{p}}} $, $D_j^e = (1/2v_{\max}) {R_j^e}^{-1} {\left.\bar{B}_j^e\right.}^{\mathrm{T}} {\nabla{J_j^{e}}} $, $\bar{R}_i^p = \left[ r^p_{i1}, \cdots, r^p_{in} \right] \in \mathbb{R}^{1 \times n}$ with each $r^p_{ik}$ ($k \in[1,n]$) is the diagonal element of $R_i^p$, $\bar{R}_j^e$ has the same definition, and $\underbar{\mathbi{1}}$ is a column vector with an appropriate dimension and all elements are scalar $1$.

{\textbf{Coupled HJI equations:}} Substituting \eqref{eq25}, \eqref{eq27} into \eqref{eq23}, then the coupled HJI equation for pursuer $i$ is derived by
\begin{equation}{\label{eq29}}
\begin{aligned}
&{\nabla{V_i^{p}}}^{\mathrm{T}}   \bar{A} {\bm{\delta}}^{p}_{i} + {\bm{\delta}^p_i}^{\mathrm{T}} \bar{Q}^{p}_{i} {\bm{\delta}^{p}_i} + {\bm{\delta}^p_i}^{\mathrm{T}}  \left(
 \sum\limits_{k \in \mathcal{N}_{-i}^p} c^{p}_{ik} Q_{ik}^p {\bm{\delta}^p_k} \right.\\
&  \left. + \sum\limits_{j \in \mathcal{N}^e} c^{pe}_{ij} Q_{ij}^{pe}  {\bm{\delta}^{e}_j} 
 \right) + u^2_{\max} \bar{R}_i^p \ln \left( \underbar{\mathbi{1}} - \tanh^2({{D}_i^p}^{*})\right)\\
 &+ \sum\limits_{k \in \mathcal{N}_{-i}^p} c^{p}_{ik} u^2_{\max} \bar{R}_k^p \ln \left( \underbar{\mathbi{1}} - \tanh^2({{D}_k^p}^{*})\right)\\
 &-  \sum\limits_{j \in \mathcal{N}^{e}} c^{pe}_{ij} v^2_{\max} \bar{R}_j^e \ln \left( \underbar{\mathbi{1}} - \tanh^2({{D}_j^e}^{*})\right) + \rho_i = 0.
\end{aligned}
\end{equation}
Similarly, the HJI equation for evader $j$ is derived by
\begin{equation}{\label{eq30}}
\begin{aligned}
&{\nabla{V_j^{e}}}^{\mathrm{T}}   \bar{A} {\bm{\delta}}^{e}_{j} - {\bm{\delta}^e_j}^{\mathrm{T}} \bar{Q}^{e}_{j} {\bm{\delta}^{e}_j} - {\bm{\delta}^e_j}^{\mathrm{T}}  \left(
 \sum\limits_{l \in \mathcal{N}_{-j}^e} c^{e}_{jl} Q_{jl}^e {\bm{\delta}^e_l} \right.\\
&  \left. + \sum\limits_{i \in \mathcal{N}^p} c^{ep}_{ji} Q_{ji}^{ep}  {\bm{\delta}^{p}_i} 
 \right) + v^2_{\max} \bar{R}_j^e \ln \left( \underbar{\mathbi{1}} - \tanh^2({{D}_j^e}^{*})\right)\\
 &+ \sum\limits_{l \in \mathcal{N}_{-j}^e} c^{e}_{jl} v^2_{\max} \bar{R}_l^e \ln \left( \underbar{\mathbi{1}} - \tanh^2({{D}_l^e}^{*})\right)\\
 &-  \sum\limits_{i \in \mathcal{N}^{p}} c^{ep}_{ji} u^2_{\max} \bar{R}_j^e \ln \left( \underbar{\mathbi{1}} - \tanh^2({{D}_i^p}^{*})\right) + \rho_j = 0.
\end{aligned}
\end{equation}

\begin{theorem}[Time-energy optimal solution for input-constrained MPE games]
Consider the input-constrained MPE games of $N$ pursuers (with dynamics \eqref{eq1}) versus $M$ evaders (with dynamics \eqref{eq2}), the communication among the players are described by $\mathcal{G}_p$, $\mathcal{G}_e$ and $\mathcal{G}_{pe}$. 
Let each pursuer adopts the control policy \eqref{eq25}, with the optimal value function $V_i^p$ satisfying the coupled HJI equation \eqref{eq29}, and each evader adopts the control policy \eqref{eq26}, with the optimal value function $V_j^e$ satisfying the coupled HJI equation \eqref{eq30}, then the capture is achieved for each pursuer if its error dynamics \eqref{eq4} is stable.
To this end, the tuple of optimal policies of all players constitute slightly altruistic global Nash equilibrium.
\end{theorem}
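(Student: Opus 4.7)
The statement bundles two claims: (i) the tuple $\{\mathbi{u}_i^*,\mathbi{v}_j^*\}$ given by \eqref{eq25}--\eqref{eq26}, with value functions solving the coupled HJI equations \eqref{eq29}--\eqref{eq30}, is a slightly altruistic global Nash equilibrium in the sense of Definition~\ref{definition1}; and (ii) capture of the assigned evader is achieved whenever each pursuer error system \eqref{eq4} is stable. I would handle these independently: a verification-style HJB argument for (i) and a direct reading of \eqref{eq3} for (ii).

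\textbf{Pointwise minimisation and Nash inequality.} For the equilibrium part I would first confirm that \eqref{eq25} is the unique pointwise minimiser of $H_i^p$ in $\mathbi{u}_i$. Differentiating $U(\mathbi{u}_i)$ yields $\nabla_{\mathbi{u}_i} U = 2u_{\max} R_i^p\tanh^{-1}(\mathbi{u}_i/u_{\max})$, so $\partial H_i^p/\partial \mathbi{u}_i=0$ reproduces \eqref{eq25}; the range of $\tanh$ makes $\mathbi{u}_i^*\in\Omega_{\mathbi{u}}$ automatic, and $\nabla^2 U\succ 0$ makes the stationary point a strict minimum. I would then freeze all opponents of pursuer $i$ at their starred policies and, for an arbitrary admissible $\mathbi{u}_i$, use $\mathrm{d}V_i^p/\mathrm{d}t = \nabla V_i^{p\mathrm{T}}\dot{\bm{\delta}}_i^p$ along the resulting trajectory to write
\begin{equation*}
J_i^p(\mathbi{u}_i) - V_i^p(0) = \int_0^{t_f}\!\bigl[H_i^p(\mathbi{u}_i) - H_i^p(\mathbi{u}_i^*)\bigr]\,\mathrm{d}t \geq 0,
\end{equation*}
where \eqref{eq29} kills the $H_i^p(\mathbi{u}_i^*)$ contribution and the terminal matching $V_i^p(t_f) = \bar{\varXi}_{\rm sf}+\bar{\varXi}_{\rm tm}+\bar{\varXi}_{\rm te}$ cancels the boundary. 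Equality holds precisely at $\mathbi{u}_i^*$, which is the first inequality of \eqref{eq20}; replaying the argument with $\max$ in place of $\min$ (because $\Lambda_j^e$ is negative-definite) handles every evader.

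\textbf{Capture.} Reading \eqref{eq3}, the lower block of $\bm{\delta}_i^p$ is $\sum_{j\in\mathcal{N}^e} c_{ij}^{pe}(\mathbi{x}_i^p - \mathbi{x}_j^e)$. Stability of \eqref{eq4} drives this weighted sum to zero, and the rolling horizon adjustment of Section~V is designed so that the weights $c_{ij}^{pe}$ concentrate on the unique assigned target $j^\star$ near the terminal time. Hence $\mathbi{x}_i^p - \mathbi{x}_{j^\star}^e \to 0$, which is exactly capture of $j^\star$ by pursuer $i$; iterating over $i$ yields capture of every evader covered by the bi-layer assignment.

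\textbf{Main obstacle.} The delicate point is the Nash verification: $J_i^p$ depends on every $\bm{\delta}_k^p$ and $\bm{\delta}_j^e$ through the altruism terms $\varXi_{\rm tm},\varXi_{\rm te}$, so a naive variational calculation would have to propagate indirect effects of $\mathbi{u}_i$ through every neighbour's trajectory. The resolution is that once opponents are frozen at $\mathbi{u}_{-i}^*,\mathbi{v}^*$, the only $\bm{\delta}$-block whose dynamics still depends on $\mathbi{u}_i$ is $\bm{\delta}_i^p$, cf.\ \eqref{eq4}; this is where the ``decoupled-by-augmentation'' construction highlighted after \eqref{eq6} is essential, and I would spell it out carefully. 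A secondary concern, left implicit by the theorem, is existence and sufficient regularity of a smooth solution to the coupled \eqref{eq29}--\eqref{eq30}; this is precisely what the critic-actor scheme of Section~IV is intended to construct numerically.
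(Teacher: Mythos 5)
Your Nash-equilibrium argument is essentially the paper's: the paper also adds and subtracts $\int_0^{t_f}\dot V_i^p\,\mathrm{d}t$ plus the terminal value (its \eqref{eq33}--\eqref{eq35}), invokes the HJI equation, and is left with integrals $\int_{\mathbi{u}_i^*}^{\mathbi{u}_i}\left(u_{\max}\tanh^{-1}(\bm{\nu}/u_{\max})\right)^{\mathrm{T}}R_i^p\,\mathrm{d}\bm{\nu}$ that are nonnegative and vanish only at $\mathbi{u}_i=\mathbi{u}_i^*$; your phrasing via $H_i^p(\mathbi{u}_i)-H_i^p(\mathbi{u}_i^*)\ge 0$ is the same computation. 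However, your proposed resolution of the ``main obstacle'' is factually wrong: freezing the opponents at their starred policies does \emph{not} make $\bm{\delta}_i^p$ the only block driven by $\mathbi{u}_i$. By \eqref{eq4}, the error $\bm{\delta}_k^p$ of a neighbouring pursuer contains the term $-c^p_{ki}E^n_{12}B_i^p\mathbi{u}_i$ whenever $i\in\mathcal{N}^p_{-k}$, and by \eqref{eq6} every $\bm{\delta}_j^e$ contains $-c^{ep}_{ji}E^n_{22}B_i^p\mathbi{u}_i$; the augmentation decouples the teammate and opponent components \emph{inside} $\bm{\delta}_i^p$, it does not remove $\mathbi{u}_i$ from the neighbours' dynamics. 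Since $J_i^p$ in \eqref{eq17} carries cross terms ${\bm{\delta}_i^p}^{\mathrm{T}}Q^p_{ik}\bm{\delta}_k^p$ and ${\bm{\delta}_i^p}^{\mathrm{T}}Q^{pe}_{ij}\bm{\delta}^e_j$, a deviation in $\mathbi{u}_i$ perturbs $J_i^p$ through those neighbouring trajectories as well. The paper silently sidesteps this by treating $V_i^p$ as a function of $\bm{\delta}_i^p$ alone in \eqref{eq31}; your write-up would inherit the same gap while explicitly (and incorrectly) claiming to have closed it.

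On capture, your route genuinely differs from the paper's and misses its main deliverable. The paper takes $V_i^p$ as a Lyapunov function, substitutes the Hamiltonian into $\dot V_i^p$ (its \eqref{eq31}), and extracts the explicit capture condition \eqref{eq32}, namely $U(\mathbi{u}_i)+\sum_{k}c^p_{ik}U(\mathbi{u}_k)-\sum_{j}c^{pe}_{ij}U(\mathbi{v}_j)>0$, which is what the bi-layer target-selection algorithm of Section~V is built to enforce. Your version assumes stability of \eqref{eq4} as a hypothesis and reads capture off \eqref{eq3}; that matches the conditional wording of the statement, and your observation that only the \emph{weighted sum} of relative positions vanishes (so that weight concentration on a single target is needed) is actually more careful than the paper's one-line conclusion. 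But because you never identify \emph{when} the error dynamics are stable, the condition \eqref{eq32} on which the remainder of the paper depends never appears in your proof.
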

\begin{proof}
(i) \textbf{Capture:} Choose the optimal value function $V_i^p$ as the Lyapunov function candidate, then its time derivative is derived by
\begin{equation}{\label{eq31}}
\begin{aligned}
\dot{V}_i^p = & {\nabla{V_i^{p}}}^{\mathrm{T}} \bm{\delta}_i^p \\
                  = & {\nabla{J_i^{p}}}^{\mathrm{T}}   \left[ \bar{A} {\bm{\delta}}^{p}_{i}  + \bar{B}^{p}_{i} \mathbi{u}_{i} -  \sum\limits_{k \in \mathcal{N}_{-i}^p} c^{p}_{ik} E^{n}_{12} \bar{B}_k^p \mathbi{u}_{k}\right.\\ 
                 & \left. \qquad \qquad \qquad \qquad\qquad  -  \sum\limits_{j \in \mathcal{N}^{e}} c^{pe}_{ij} E^{n}_{22} \bar{B}_j^e \mathbi{v}_{j} \right]\\
                  =&  -{\bm{\delta}^p_i}^{\mathrm{T}} \tilde{Q}^{p}_{i} {\bm{\delta}^{p}_i} \!-\! {\bm{\delta}^p_i}^{\mathrm{T}}  \! \left( \sum\limits_{k \in \mathcal{N}_{-i}^p} \! c^{p}_{ik} Q_{ik}^p {\bm{\delta}^p_k}
  \!+\!  \sum\limits_{j \in \mathcal{N}^e} \!c^{pe}_{ij} Q_{ij}^{pe}  {\bm{\delta}^{e}_j} 
\!  \right) \\
 & - U(\mathbi{u}^*_i) - \sum\limits_{k \in \mathcal{N}_{-i}^p} c^{p}_{ik} U(\mathbi{u}^*_k) + \sum\limits_{j \in \mathcal{N}^e} c^{pe}_{ij} U(\mathbi{v}^*_j) - \rho_i.
\end{aligned}
\end{equation}
Based on \eqref{eq31}, $\dot{V}_i^p$ is negative on the condition that
\begin{equation}{\label{eq32}}
 U(\mathbi{u}_i) + \sum\limits_{k \in \mathcal{N}_{-i}^p} c^{p}_{ik} U(\mathbi{u}_k) - \sum\limits_{j \in \mathcal{N}^e} c^{pe}_{ij} U(\mathbi{v}_j) >0,
\end{equation}
which is the capture condition. Hence, the local dynamics of each pursuer is stable if \eqref{eq32} is satisfied, i.e., the pursuers capture their neighboring adversarial players.
Note that the capture condition \eqref{eq32} can be satisfied through target selection strategy, which is further studied in Section V.

(ii) \textbf{Global Nash equilibrium:} Consider the proposed slightly altruistic index \eqref{eq17} for pursuer $i$, it could be rewritten by
\begin{equation}{\label{eq33}}
\begin{aligned}
J_i^p = J_i^p +  \int_{0}^{t_f} \dot{V}_i^p \mathrm{d}t + V_i^p \left(\bm{\delta}_i^p(0) \right) - V_i^p \left(\bm{\delta}_i^p(t_f) \right).
\end{aligned}
\end{equation}
Due to the fact that 
$V_i^p \left(\bm{\delta}_i^p(t_f) \right) = {\bm{\delta}^p_i}^{\mathrm{T}}(t_f) \bar{\tilde{Q}}^{p}_{i} {\bm{\delta}^{p}_i}(t_f) 
+{\bm{\delta}^p_i}^{\mathrm{T}}(t_f)  \left( \sum\nolimits_{k \in \mathcal{N}_{-i}^p} c^{p}_{ik} \bar{Q}_{ik}^p {\bm{\delta}^p_k}(t_f)
  + \sum\nolimits_{j \in \mathcal{N}^e} c^{pe}_{ij} \bar{Q}_{ij}^{pe}  {\bm{\delta}^{e}_j}(t_f) 
 \right)$, then \eqref{eq33} is further converted to
\begin{equation}{\label{eq34}}
\begin{aligned}
J_i^p \!=\! & 
 \int_{0}^{t_f} \!\!
 \left[ {\bm{\delta}^p_i}^{\mathrm{T}} \tilde{Q}^{p}_{i} {\bm{\delta}^{p}_i} + {\bm{\delta}^p_i}^{\mathrm{T}} \! \left( \sum\limits_{k \in \mathcal{N}_{-i}^p} \!\! c^{p}_{ik} Q_{ik}^p {\bm{\delta}^p_k}
  + \!  \sum\limits_{j \in \mathcal{N}^e} \!\! c^{pe}_{ij} Q_{ij}^{pe}  {\bm{\delta}^{e}_j} \!\!
 \right) \right. \\
 &+ 2 \int_{0}^{\mathbi{u}_i} \left( u_{\max} \tanh^{-1} \left( \frac{\bm{\nu}} {u_{\max}} \right) \right)^{\mathrm{T}} R^p_i \ \mathrm{d} \bm{\nu}\\
 & + \sum\limits_{k \in \mathcal{N}_{-i}^p} 2c^{p}_{ik}  \int_{0}^{\mathbi{u}_k} \left( u_{\max} \tanh^{-1} \left( \frac{\bm{\nu}} {u_{\max}} \right) \right)^{\mathrm{T}} R^p_k \ \mathrm{d} \bm{\nu}\\
 & \left. - \sum\limits_{j \in \mathcal{N}^e} 2c^{pe}_{ij}  \int_{0}^{\mathbi{v}_j} \left( v_{\max} \tanh^{-1} \left( \frac{\bm{\nu}} {v_{\max}} \right) \right)^{\mathrm{T}} R^e_j \ \mathrm{d} \bm{\nu} + \rho_i \right]  \mathrm{d}t \\
 & + \int_{0}^{t_f} {\nabla{V_i^{p}}}^{\mathrm{T}} \bm{\delta}_i^p  \mathrm{d}t +  V_i^p \left(\bm{\delta}_i^p(0)\right).
\end{aligned}
\end{equation}

By combining with \eqref{eq31}, \eqref{eq34} is further converted to
\begin{equation}{\label{eq35}}
\begin{aligned}
J_i^p \!=\! & 
 \int_{0}^{t_f} 
 \left[ 
  2 \int_{0}^{\mathbi{u}_i} \left( u_{\max} \tanh^{-1} \left( \frac{\bm{\nu}} {u_{\max}} \right) \right)^{\mathrm{T}} R^p_i \ \mathrm{d} \bm{\nu} \right. \\
 & -2 \int_{0}^{\mathbi{u}^*_i} \left( u_{\max} \tanh^{-1} \left( \frac{\bm{\nu}} {u_{\max}} \right) \right)^{\mathrm{T}} R^p_i \ \mathrm{d} \bm{\nu} \\
 & + \sum\limits_{k \in \mathcal{N}_{-i}^p} 2c^{p}_{ik}  \int_{0}^{\mathbi{u}_k} \left( u_{\max} \tanh^{-1} \left( \frac{\bm{\nu}} {u_{\max}} \right) \right)^{\mathrm{T}} R^p_k \ \mathrm{d} \bm{\nu}\\
 & - \sum\limits_{k \in \mathcal{N}_{-i}^p} 2c^{p}_{ik}  \int_{0}^{\mathbi{u}^*_k} \left( u_{\max} \tanh^{-1} \left( \frac{\bm{\nu}} {u_{\max}} \right) \right)^{\mathrm{T}} R^p_k \ \mathrm{d} \bm{\nu}\\
 & - \sum\limits_{j \in \mathcal{N}^e} 2c^{pe}_{ij}  \int_{0}^{\mathbi{v}_j} \left( v_{\max} \tanh^{-1} \left( \frac{\bm{\nu}} {v_{\max}} \right) \right)^{\mathrm{T}} R^e_j \ \mathrm{d} \bm{\nu}\\
 & \left. + \sum\limits_{j \in \mathcal{N}^e} 2c^{pe}_{ij}  \int_{0}^{\mathbi{v}^*_j} \left( v_{\max} \tanh^{-1} \left( \frac{\bm{\nu}} {v_{\max}} \right) \right)^{\mathrm{T}} R^e_j \ \mathrm{d} \bm{\nu}  \right]  \mathrm{d}t \\
 & +  V_i^p \left(\bm{\delta}_i^p(0)\right).
\end{aligned}
\end{equation}
Consider the fact that $\int_{0}^{\mathbi{u}_i} \left( u_{\max} \tanh^{-1} \left( {\bm{\nu}}/ {u_{\max}} \right) \right)^{\mathrm{T}} R^p_i \ \mathrm{d} \bm{\nu}
- \int_{0}^{\mathbi{u}^*_i} \left( u_{\max} \tanh^{-1} \left( {\bm{\nu}} /{u_{\max}} \right) \right)^{\mathrm{T}} R^p_i \ \mathrm{d} \bm{\nu} = \int_{\mathbi{u}*_i}^{\mathbi{u}_i} \left( u_{\max} \tanh^{-1} \left( {\bm{\nu}}/ {u_{\max}} \right) \right)^{\mathrm{T}} R^p_i \ \mathrm{d} \bm{\nu}$, and the other terms have the similar form. According to the result in[], it could be conclude that the above equation is semi-positive, it equals to zeros if and only if ${\mathbi{u}*_i}={\mathbi{u}_i}$, then it could be concluded that the value function $J_i^p$ has the optimal value of $V_i^p \left(\bm{\delta}_i^p(0)\right)$, on the condition that $\mathbi{u}_i = \mathbi{u}^*_i$ and the neighboring players adopt their optimal policy.

Similarly, the value function for evader $j$ could be converted to the following form, i.e.,
\begin{equation}{\label{eq36}}
\begin{aligned}
J_j^e \!=\! & 
 \int_{0}^{t_f} 
 \left[ 
  2 \int_{\mathbi{v}^*_j}^{\mathbi{v}_j} \left( v_{\max} \tanh^{-1} \left( \frac{\bm{\nu}} {v_{\max}} \right) \right)^{\mathrm{T}} R^e_j \ \mathrm{d} \bm{\nu} \right. \\
 & + \sum\limits_{l \in \mathcal{N}_{-j}^e} c^{e}_{jl} 2 \int_{\mathbi{v}^*_l}^{\mathbi{v}_l} \left( v_{\max} \tanh^{-1} \left( \frac{\bm{\nu}} {v_{\max}} \right) \right)^{\mathrm{T}} R^e_l \ \mathrm{d} \bm{\nu}\\
 & \left. - \sum\limits_{i \in \mathcal{N}^p} c^{ep}_{ji} 2 \int_{\mathbi{u}^*_i}^{\mathbi{u}_i} \left( u_{\max} \tanh^{-1} \left( \frac{\bm{\nu}} {u_{\max}} \right) \right)^{\mathrm{T}} R^p_i \ \mathrm{d} \bm{\nu}  \right]  \mathrm{d}t \\
 & +  V_j^e \left(\bm{\delta}_j^e(0)\right).
\end{aligned}
\end{equation}
Based on \eqref{eq35} and \eqref{eq36}, it could be concluded that the global Nash equilibrium
\begin{equation*}
\begin{aligned}
 J_{i}^{p} ({ \mathbi{u}^{*}_i }, {\left. {\mathbi{u}}^{*}_{\mathcal{G}_p-i} \right.}, {\left. {\mathbi{v}}^{*}_{\mathcal{G}_{e}} \right.}) &\leq J_{i}^{p} (  { \mathbi{u}_i }, {\left. {\mathbi{u}}^{*}_{\mathcal{G}_p-i} \right.}, {\left. {\mathbi{v}}^{*}_{\mathcal{G}_{e}} \right.})\\
J_{j}^{e} ({ \mathbi{v}^{*}_j }, {\left. {\mathbi{v}}^{*}_{\mathcal{G}_e-j} \right.}, {\left. {\mathbi{u}}^{*}_{\mathcal{G}_{p}} \right.}) &\leq J_{i}^{p} (  { \mathbi{v}_i }, {\left. {\mathbi{v}}^{*}_{\mathcal{G}_e-j} \right.}, {\left. {\mathbi{u}}^{*}_{\mathcal{G}_{p}} \right.})
\end{aligned}
\end{equation*}
is achieved. $\hfill\blacksquare$
\end{proof} 

\section{Solving Constrained MPE Games via Reinforcement Learning}
\subsection{Policy iteration algorithm}
Consider the coupled HJI equations \eqref{eq29} and \eqref{eq30} with input constraints can not solved by the Riccati equation-based methods.
The actor-critic reinforcement learning is adopted in this paper to solve the coupled HJI equations.
A policy iteration algorithm shown in Algorithm \ref{algorithm1} applied in the MPE games with slight altruism is first provided.
\begin{algorithm}[t]
	\caption{Policy Iteration for Input-constrained MPE Games.}
	\label{algorithm1}
	\KwIn{ The real-time local error of each player and its neighboring players.}
	\KwOut{The tuple of optimal control policies of all players.}  
	\BlankLine
	Initializing control policy $\mathbi{u}_i^k$ and $\mathbi{v}_j^k, k=0$ satisfying stabilization, $\forall i \in \mathcal{N}^p$, $\forall j \in \mathcal{N}^e$;
	
	\While{\textnormal{$ \| \left({V_i^p}\right)^{k+1} - \left({V_i^p}\right)^{k} \| >  \gamma$}}{

		\ForEach{k}{
			\emph{Step 1 (Policy  Evaluation):} solve the value functions $\left({V_i^p}\right)^{k}$ and $\left({V_j^e}\right)^{k}$ using
			\begin{equation}{\label{eq37}}
			\begin{aligned}
			H_i^p \left( \left({V_i^p}\right)^{k}, \bm{\delta}_i^p, \bm{\delta}_{\mathcal{G}_p-i}^p, \bm{\delta}_{\mathcal{G}_{pe}-i}^e, \right. \\ \left.  { \mathbi{u}^k_i },  {\mathbi{u}}^{k}_{\mathcal{G}_p-i}, {\mathbi{v}}^{k}_{\mathcal{G}_{pe}-i}  \right) =0,
			\end{aligned}
			\end{equation}
			and
			\begin{equation}{\label{eq38}}
			\begin{aligned}
			H_j^e \left( \left({V_j^e}\right)^{k}, \bm{\delta}_j^e, \bm{\delta}_{\mathcal{G}_e-j}^e, \bm{\delta}_{\mathcal{G}_{pe}-j}^p, \right. \\ \left.  { \mathbi{v}^k_j },  {\mathbi{v}}^{k}_{\mathcal{G}_e-j}, {\mathbi{u}}^{k}_{\mathcal{G}_{pe}-j}  \right) =0;
			\end{aligned}
			\end{equation}
			
			\emph{Step 2 (Policy Improvement):} The control policies are updated by
			\begin{equation}{\label{eq39}}
			\begin{aligned}
			\mathbi{u}_i^{k+1} &= \arg \min H_i^p \\
			& \Downarrow\\
			 {\bm{u}_i^{k+1}} &\!=\! -u_{\max} \! \tanh \! \left(\! \frac{1}{2u_{\max}} {R_i^p}^{-1} {\left.\bar{B}_i^p\right.}^{\mathrm{T}} {\left( \nabla{V_i^{p}}\right)}^k \! \right)
			\end{aligned}
			\end{equation}
			and
			\begin{equation}{\label{eq40}}
			\begin{aligned}
			\mathbi{v}_j^{k+1} &= \arg \min H_j^e \\
			& \Downarrow\\
			 {\bm{v}_j^{k+1}} &\!= \!-v_{\max} \! \tanh \! \left(\! \frac{1}{2v_{\max}} {R_j^e}^{-1} {\left.\bar{B}_j^e\right.}^{\mathrm{T}} {\left( \nabla{V_j^{e}}  \right)}^k \! \right)
			\end{aligned}
			\end{equation}
		}
		Update $k$ by $k+1$.
	}

\end{algorithm}

\begin{theorem}[Convergence of Policy Iteration Algorithm]
\label{theorem2}
Suppose each player in the input-constrained MPE games performs Algorithm \ref{algorithm1}, then the iterated value functions converge to their corresponding optimal values $V_i^p\ (i \in \mathcal{N}_p)$ and $V_j^e\ (j \in \mathcal{N}_e)$, and the iterated control policies converge to the optimal policies, under the both following cases, i.e., \\
(i) Only the player performing the algorithm updates his control policy;\\
(ii) All the players update their control policy.
\end{theorem}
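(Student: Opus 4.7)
My plan is to treat the two cases separately, since case (i) is essentially a standard single-agent policy iteration argument applied to the $i$-th pursuer's (or $j$-th evader's) HJI equation with all opponents and teammates frozen at iteration $k$, whereas case (ii) requires an additional contraction / fixed-point style argument to handle the simultaneous updates of all players. In both cases I would first exhibit monotonicity of the iterated value functions $\{(V_i^p)^k\}$ and $\{(V_j^e)^k\}$ along the iterations, then establish a uniform lower bound (nonnegativity on $\bm{\delta}_i^p$, nonpositivity or a suitable sign for the evader after absorbing the minus signs in \eqref{eq19}), and finally invoke the HJI equations \eqref{eq29}--\eqref{eq30} to identify the limit with the optimal value function. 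The optimality of the policies then follows from the closed-form \eqref{eq25}--\eqref{eq26}, since $\tanh(\cdot)$ is a continuous function of $\nabla V$.

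For case (i), I would fix ${\mathbi{u}}^{k}_{\mathcal{G}_p-i}$ and ${\mathbi{v}}^{k}_{\mathcal{G}_{pe}-i}$ and study the single-player HJ equation \eqref{eq37}. Let $(V_i^p)^k$ solve \eqref{eq37} under $\mathbi{u}^k_i$, and $(V_i^p)^{k+1}$ solve it under $\mathbi{u}^{k+1}_i$ given by \eqref{eq39}. Taking the difference of the two Hamilton equations and using $\mathbi{u}^{k+1}_i=\arg\min_{\mathbi{u}_i} H_i^p((V_i^p)^k,\cdot,\mathbi{u}_i,\cdot)$ gives, via the same manipulation that produced \eqref{eq35}, an expression for $(V_i^p)^k-(V_i^p)^{k+1}$ as an integral of a nonnegative integrand (the Bregman-type term $2\int_{\mathbi{u}^{k+1}_i}^{\mathbi{u}^k_i}(u_{\max}\tanh^{-1}(\bm{\nu}/u_{\max}))^{\mathrm{T}} R_i^p\,\mathrm{d}\bm{\nu}$ that vanishes iff $\mathbi{u}^k_i=\mathbi{u}^{k+1}_i$). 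Thus $(V_i^p)^k$ is monotonically nonincreasing in $k$. Combined with the lower bound $(V_i^p)^k\geq V_i^p$ (inherited from the optimality of $V_i^p$ against the frozen neighbors), the sequence converges pointwise; continuity of $\tanh$ then forces the HJI equation to be satisfied in the limit, yielding optimality. The argument for the evader is symmetric after accounting for the opposite sign convention in \eqref{eq18}.

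The hard part will be case (ii), where all players update simultaneously and the monotonicity argument above breaks down because the ``environment'' seen by player $i$ also changes from iteration $k$ to $k+1$. My plan is to view one full synchronous sweep of Algorithm~\ref{algorithm1} as a map $T$ on the tuple $\mathcal{V}^k=((V_i^p)^k,(V_j^e)^k)_{i,j}$, and to show that $T$ admits $\mathcal{V}^\star=(V_i^p,V_j^e)_{i,j}$ as a fixed point (which is immediate by construction) and is contractive in a suitable weighted sup-norm over a neighborhood of $\mathcal{V}^\star$. The contraction constant will depend on the cross-coupling weights $c^p_{ik}$, $c^{pe}_{ij}$, $c^e_{jl}$, $c^{ep}_{ji}$ and on the Lipschitz constant of the saturated feedback $-u_{\max}\tanh((1/2u_{\max})R_i^{p\,-1}\bar B_i^{p\,\mathrm{T}}\nabla V_i^p)$; positive-definiteness of $\tilde Q^p_i$ and $-\tilde Q^e_j$ noted after \eqref{eq17} (together with a smallness condition on the altruism parameters $\mu,\eta,\Gamma$ relative to the self-interest weights) will be used to dominate the cross terms. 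A Banach-fixed-point argument then yields convergence of $\mathcal{V}^k\to\mathcal{V}^\star$, and the corresponding policies converge by \eqref{eq39}--\eqref{eq40}.

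If a global contraction cannot be obtained, my fallback is a local convergence argument: linearize \eqref{eq39}--\eqref{eq40} about the equilibrium policies, observe that the Jacobian of $T$ at $\mathcal{V}^\star$ is governed by the same coupling matrix, and invoke the spectral condition that the graph weights in $\mathcal{G}_p$, $\mathcal{G}_e$, $\mathcal{G}_{pe}$ be chosen so that this Jacobian is Schur. This is the step I expect to require the most care, and where any required assumption on the coupling strengths (implicit in the paper's index-function construction) would need to be made explicit. Once convergence of the value tuple is established in either variant, convergence of the policies and their coincidence with the slightly altruistic Nash equilibrium from Definition~\ref{definition1} follow from the continuity of the feedback map and Theorem~1.
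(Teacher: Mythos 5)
You should first be aware that the paper does not actually prove this theorem: its entire ``proof'' is the sentence that the argument follows by combining the single-agent constrained-input results of the cited references with the multi-player argument of Vamvoudakis--Lewis. So there is no detailed in-paper argument to match against; your proposal is substantially more explicit than anything the authors wrote. For case (i), your monotonicity-plus-lower-bound argument is essentially the route taken in those references (policy evaluation followed by a greedy improvement step yields a nonincreasing sequence of value functions via the Bregman-type integral term, which is exactly the quantity that appears in the paper's own derivation of \eqref{eq35}), so on that half you are aligned with what the paper implicitly relies on.

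There are, however, two concrete gaps. First, in case (i) your claimed lower bound $(V_i^p)^k \geq V_i^p$ is not automatic here: the index \eqref{eq17} contains the sign-indefinite cross terms ${\bm{\delta}^p_i}^{\mathrm{T}} Q_{ij}^{pe}{\bm{\delta}^e_j}$ and the \emph{negative} opponent-energy term $-\sum_j c^{pe}_{ij}U(\mathbi{v}_j)$, so $(V_i^p)^k$ is not obviously bounded below by anything until you have shown that every iterate $\mathbi{u}_i^k$ remains admissible (stabilizing with finite cost against the frozen neighbors). Admissibility of the improved policy must be proved as an induction step, not assumed; this is precisely the lemma the cited single-agent papers devote most of their effort to, and your sketch omits it. Second, and more seriously, your case (ii) contraction argument requires a smallness condition on the coupling weights $c^p_{ik}$, $c^{pe}_{ij}$ and the altruism parameters $\mu,\eta,\Gamma$ --- you say so yourself --- but no such hypothesis appears in the statement of Theorem~\ref{theorem2}. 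As written, your argument would therefore establish a strictly weaker theorem than the one claimed. This is not really a defect of your proposal so much as a defect of the theorem: synchronous policy iteration in coupled multi-player games is not known to converge without some such condition, and the reference the paper leans on is itself informal on exactly this point. If you pursue this proof, you should either state the coupling-strength assumption explicitly or restrict the claim to the Gauss--Seidel (one-player-at-a-time) update of case (i).
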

\begin{proof}
By integrating the proof of input-constrained optimal control for a single agent in \cite{ref2,ref3}, the remaining proof of Theorem \ref{theorem2} is similar with \cite{ref1}. $\hfill\blacksquare$
\end{proof}

\subsection{Critic-actor neural network}
Based on the policy iteration Algorithm \ref{algorithm1}, the following results provide the online adaptive learned solution for the MPE game.
The learning process is based on the architecture of a critic neural network (NN) which is used for approximating the optimal value functions, and an actor NN which is employed to approximate the optimal control policies.
The tuning law provided later is used to adjust the weight of the neural networks, which are introduced first.

\textbf{Critic NN:} Based on the Weierstrass higher-order approximation theorem, the value function of the $i$-th pursuer could be represented by
\begin{equation}{\label{eq41}}
V_i^p  = \left({W_i^p}\right)^{\mathrm{T}} \phi_i^p (\bm{\delta}_i^p, \bm{\delta}_{\mathcal{G}_p-i}^p, \bm{\delta}_{\mathcal{G}_{pe}-i}^e) + \varepsilon_i^p
\end{equation}
where ${W_i^p} \in  \mathbb{R}^{h}  $ denotes the weight vector, $\phi_i^p (\bm{\delta}_i^p, \bm{\delta}_{\mathcal{G}_p-i}^p, \bm{\delta}_{\mathcal{G}_{pe}-i}^e)$ represents the basis function vector of the neural network with $h$ hidden layers, $\varepsilon_i^p$ represents the approximating error. Note that the basis function vector is related to both the local of pursuer $i$, but also the local errors of his neighboring players, due to the proposed slightly altruistic index function. The objective of critic NN is using the estimation of $W_i^p$ to approximate its corresponding actual value, such that
\begin{equation}{\label{eq42}}
\hat{V}_i^p ( {\bm{\delta}^{p}_i} ) = \left({\hat{W}_{i|c}^p}\right)^{\mathrm{T}} \phi_i^p (\bm{\delta}_i^p, \bm{\delta}_{\mathcal{G}_p-i}^p, \bm{\delta}_{\mathcal{G}_{pe}-i}^e) \Rightarrow W_{i|c}^p \phi_i^p  .
\end{equation}

Based on \eqref{eq21} and \eqref{eq42}, the Bellman equation is converted to
\begin{equation}{\label{eq43}}
\begin{aligned}
H_i^p = &   {\bm{\delta}^p_i}^{\mathrm{T}} \tilde{Q}^{p}_{i} {\bm{\delta}^{p}_i} + {\bm{\delta}^p_i}^{\mathrm{T}}  \left( \sum\limits_{k \in \mathcal{N}_{-i}^p} c^{p}_{ik} Q_{ik}^p {\bm{\delta}^p_k}
  - \sum\limits_{j \in \mathcal{N}^e} c^{pe}_{ij} Q_{ij}^{pe}  {\bm{\delta}^{e}_j} 
 \right) \\
 &+ U(\mathbi{u}_i) + \sum\limits_{k \in \mathcal{N}_{-i}^p} c^{p}_{ik} U(\mathbi{u}_k) - \sum\limits_{j \in \mathcal{N}^e} c^{pe}_{ij} U(\mathbi{v}_j) + \rho \\
                & +\left({\hat{W}_{i|c}^p}\right)^{\mathrm{T}} \phi_i^p   \left[ \bar{A} {\bm{\delta}}^{p}_{i}  + \bar{B}^{p}_{i} \mathbi{u}_{i} -  \sum\limits_{k \in \mathcal{N}_{-i}^p} c^{p}_{ik} E^{n}_{12} B_k^p \mathbi{u}_{k}\right. \\
                &\left. -  \sum\limits_{j \in \mathcal{N}^{e}} c^{pe}_{ij} E^{n}_{22} B_j^e \mathbi{v}_{j} \right] \equiv \zeta_{i|c}^p. 
\end{aligned}
\end{equation}

Define the optimization objective of the critic NN as
\begin{equation}{\label{eq44}}
E_{i|c}^p = \frac{1}{2} \left( \zeta_{i|c}^p \right)^2,
\end{equation}
where the least square solution of \eqref{eq44} is obtained by $\hat{W}_{i|c}^p \rightarrow W_i^p$. The tuning law of the critic weights is provided in Theorem \ref{theorem3}. 

Similarly, the critic NN for evader $j$ is described by
\begin{equation}{\label{eq45}}
V_j^e  = \left({W_j^e}\right)^{\mathrm{T}} \phi_j^e (\bm{\delta}_j^e, \bm{\delta}_{\mathcal{G}_e-j}^e, \bm{\delta}_{\mathcal{G}_{pe}-j}^p) + \varepsilon_j^e,
\end{equation}
where the parameters are defined similarly with \eqref{eq41}. By defining the approximating error as $\zeta_j^e$ which is similar with \eqref{eq43}, the optimization function is described by
\begin{equation}{\label{eq46}}
E_{j|c}^e = \frac{1}{2} \left( \zeta_{j|c}^e \right)^2,
\end{equation}
where the least square solution of \eqref{eq46} is obtained by $\hat{W}_{j|c}^e \rightarrow W_j^e$. The tuning law of the critic weights is also provided in Theorem \ref{theorem3}.

\textbf{Actor NN:} Based on the actor neural network, the control law of pursuer $i$ that approximates \eqref{eq25} is described by
\begin{equation}{\label{eq47}}
{\hat{\bm{u}}_i^{*}} = -u_{\max}\tanh \left(\frac{1}{2u_{\max}} {R_i^p}^{-1} {B_i^p}^{\mathrm{T}}  {\frac{\partial{\phi_i^p}}{\partial{\bm{\delta}^p_i}}}^{\mathrm{T}}  \hat{W}_{i|a}^p \right).
\end{equation}

Define the error of the actor NN as
\begin{equation}{\label{eq48}}
\begin{aligned}
\bm{\zeta}_{i|a}^p = & {\hat{\bm{u}}_i^{*}} - {\bm{u}_i^{*}}\\
                             = & u_{\max} \left[ \tanh\left(\frac{1}{2u_{\max}} {R_i^p}^{-1} {B_i^p}^{\mathrm{T}}  {\frac{\partial{\phi_i^p}}{\partial{\bm{\delta}^p_i}}}^{\mathrm{T}}  \hat{W}_{i|a}^p \right) \right.\\
                             & \left. - \tanh\left(\frac{1}{2u_{\max}} {R_i^p}^{-1} {B_i^p}^{\mathrm{T}}  {\frac{\partial{\phi_i^p}}{\partial{\bm{\delta}^p_i}}}^{\mathrm{T}}  \hat{W}_{i|c}^p \right)
                             \right].
\end{aligned}
\end{equation}
Thus define the optimization objective of the actor NN as
\begin{equation}{\label{eq49}}
E_{i|c}^p = {\bm{\zeta}_{i|a}^p}^{\mathrm{T}} R_i^p \bm{\zeta}_{i|a}^p,
\end{equation}
with the tuning law of the actor weights is provided in Theorem \ref{theorem3}. 

Similarly, the control law of evader $j$ that approximates \eqref{eq26} is described by
\begin{equation}{\label{eq50}}
{\hat{\bm{v}}_j^{*}} = -v_{\max}\tanh \left(\frac{1}{2v_{\max}} {R_j^e}^{-1} {B_j^e}^{\mathrm{T}}  {\frac{\partial{\phi_j^e}}{\partial{\bm{\delta}^e_j}}}^{\mathrm{T}}  \hat{W}_{j|a}^e \right).
\end{equation}
Define the error of the actor NN as
\begin{equation}{\label{eq51}}
\begin{aligned}
\bm{\zeta}_{j|a}^e = & {\hat{\bm{v}}_j^{*}} - {\bm{v}_j^{*}}\\
                             = & v_{\max} \left[ \tanh\left(\frac{1}{2v_{\max}} {R_j^e}^{-1} {B_j^e}^{\mathrm{T}}  {\frac{\partial{\phi_j^e}}{\partial{\bm{\delta}^e_j}}}^{\mathrm{T}}  \hat{W}_{j|a}^e \right) \right.\\
                             & \left. - \tanh\left(\frac{1}{2v_{\max}} {R_j^e}^{-1} {B_j^e}^{\mathrm{T}}  {\frac{\partial{\phi_j^e}}{\partial{\bm{\delta}^e_j}}}^{\mathrm{T}}  \hat{W}_{j|c}^e \right)
                             \right].
\end{aligned}
\end{equation}
Thus define the optimization objective of the actor NN as
\begin{equation}{\label{eq52}}
E_{j|c}^e = {\bm{\zeta}_{j|a}^e}^{\mathrm{T}} R_j^e \bm{\zeta}_{j|a}^e,
\end{equation}
with the tuning law of the actor weights is also provided in Theorem \ref{theorem3}.

\subsection{Tuning law}
\textbf{Critic NN:} Based on \eqref{eq46}, the tuning law for the critic NN of pursuer $i$ is described by
\begin{equation}{\label{eq53}}
\begin{aligned}
\dot{\hat{W}}_{i|c}^p \!=\! & -\alpha_i \frac{\partial{E_{i|c}^p}}{\partial{{\hat{W}_{i|c}^p}}}\\
                            = & -\alpha_i \frac{{\sigma_{i|c}^p}}{\left(1+ {{\sigma_{i|c}^p}}^{\mathrm{T}} {\sigma_{i|c}^p} \right)^2} 
                            \left[ \left( {\sigma_{i|c}^p} \right)^{\mathrm{T}} {\hat{W}_{i|c}^p}  \right.\\
                            & +  {\bm{\delta}^p_i}^{\mathrm{T}} \tilde{Q}^{p}_{i} {\bm{\delta}^{p}_i} + {\bm{\delta}^p_i}^{\mathrm{T}}  \left( \sum\limits_{k \in \mathcal{N}_{-i}^p} c^{p}_{ik} Q_{ik}^p {\bm{\delta}^p_k}
  - \sum\limits_{j \in \mathcal{N}^e} c^{pe}_{ij} Q_{ij}^{pe}  {\bm{\delta}^{e}_j}  \right)\\
  &\left. + U(\hat{\mathbi{u}}_i) + \sum\limits_{k \in \mathcal{N}_{-i}^p} c^{p}_{ik} U(\hat{\mathbi{u}}_k) - \sum\limits_{j \in \mathcal{N}^e} c^{pe}_{ij} U(\hat{\mathbi{v}}_j) + \rho_i \right],
\end{aligned}
\end{equation}
where $\alpha_i$ denotes a positive tuning parameter, ${\sigma_{i|c}^p} = \frac{\partial{\phi_i^p}}{\partial{\bm{\delta}^p_i}} \left( \bar{A} {\bm{\delta}}^{p}_{i}  + \bar{B}^{p}_{i} \hat{\mathbi{u}}_{i|a} -  \sum\limits_{k \in \mathcal{N}_{-i}^p} c^{p}_{ik} E^{n}_{12} B_k^p \hat{\mathbi{u}}_{k|a} \right.$ $\left. -  \sum\limits_{j \in \mathcal{N}^{e}} c^{pe}_{ij} E^{n}_{22} B_j^e \hat{\mathbi{v}}_{j|a} \right)$, and
\begin{equation}{\label{eq54}}
\begin{aligned}
U(\hat{\mathbi{u}}_i) = & 2 \int_{0}^{\hat{\mathbi{u}}_i} \left( u_{\max} \tanh^{-1} (\frac{\bm{\nu}}{u_{\max}}) \right)^{\mathrm{T}} R^p_i \ \mathrm{d} \bm{\nu} \\
                                 = & u_{\max} \left( {\hat{W}_{i|a}^p} \right)^{\mathrm{T}} \frac{\partial{\phi_i^p}}{\partial{\bm{\delta}^p_i}} B_i^p \tanh(\hat{D}_i^p)\\
                                 &+ u^2_{\max} \bar{R}_i^p \ln \left( \underbar{1} - \tanh^2(\hat{D}_i^p)\right),
\end{aligned}
\end{equation}
with $\hat{D}_i^p = \frac{1}{2 u_{\max}} {R_i^p}^{-1} {B_i^p}^{\mathrm{T}} {\frac{\partial{\phi_i^p}}{\partial{\bm{\delta}^p_i}}}^{\mathrm{T}} \hat{W}_{i|a}^p$.

Similarly, the tuning law for the critic NN of evader $j$ is described by
\begin{equation}{\label{eq55}}
\begin{aligned}
\dot{\hat{W}}_{j|c}^e \!=\! & -\alpha_j \frac{\partial{E_{j|c}^e}}{\partial{{\hat{W}_{j|c}^e}}}\\
                            = & -\alpha_j \frac{{\sigma_{j|c}^e}}{\left(1+ {{\sigma_{j|c}^e}}^{\mathrm{T}} {\sigma_{j|c}^e} \right)^2} 
                            \left[ \left( {\sigma_{j|c}^e} \right)^{\mathrm{T}} {\hat{W}_{j|c}^e}  \right.\\
                            & -  {\bm{\delta}^e_j}^{\mathrm{T}} \tilde{Q}^{e}_{j} {\bm{\delta}^{e}_j} - {\bm{\delta}^e_j}^{\mathrm{T}}  \left( \sum\limits_{l \in \mathcal{N}_{-j}^e} c^{e}_{jl} Q_{jl}^e {\bm{\delta}^e_l}
  - \sum\limits_{i \in \mathcal{N}^p} c^{ep}_{ji} Q_{ji}^{ep}  {\bm{\delta}^{p}_i}  \right)\\
  &\left. + U(\hat{\mathbi{v}}_i) + \sum\limits_{l \in \mathcal{N}_{-j}^e} c^{e}_{jl} U(\hat{\mathbi{v}}_l) - \sum\limits_{i \in \mathcal{N}^p} c^{ep}_{ji} U(\hat{\mathbi{v}}_i) + \rho_j \right],
\end{aligned}
\end{equation}
where ${\sigma_{j|c}^e} = \frac{\partial{\phi_j^e}}{\partial{\bm{\delta}^e_j}} \left( \bar{A} {\bm{\delta}}^{e}_{j}  + \bar{B}^{e}_{j} \hat{\mathbi{v}}_{j|a} -  \sum\limits_{l \in \mathcal{N}_{-j}^e} c^{e}_{jl} E^{n}_{12} B_l^e \hat{\mathbi{v}}_{l|a} \right.$ $\left. -  \sum\limits_{i \in \mathcal{N}^{p}} c^{ep}_{ji} E^{n}_{22} B_i^p \hat{\mathbi{u}}_{i|a} \right)$, and
\begin{equation}{\label{eq56}}
\begin{aligned}
U(\hat{\mathbi{v}}_j) = & 2 \int_{0}^{\hat{\mathbi{v}}_j} \left( v_{\max} \tanh^{-1} (\frac{\bm{\nu}}{v_{\max}}) \right)^{\mathrm{T}} R_j^e \ \mathrm{d} \bm{\nu} \\
                                 = & v_{\max} \left( {\hat{W}_{j|a}^e} \right)^{\mathrm{T}} \frac{\partial{\phi_j^e}}{\partial{\bm{\delta}^e_j}} B_i^p \tanh(\hat{D}_j^e)\\
                                 &+ v^2_{\max} \bar{R}_j^e \ln \left( \underbar{1} - \tanh^2(\hat{D}_j^e)\right)
\end{aligned}
\end{equation}
with $\hat{D}_j^e = \frac{1}{2 v_{\max}} {R_j^e}^{-1} {B_j^e}^{\mathrm{T}} {\frac{\partial{\phi_j^e}}{\partial{\bm{\delta}^e_j}}}^{\mathrm{T}} \hat{W}_{j|a}^e$.

\textbf{Actor NN:} The tuning law of the actor NN for pursuer $i$ is 
\begin{equation}{\label{eq57}}
\begin{aligned}
\dot{\hat{W}}_{i|a}^p \!=\! & - \beta_i \left( {\frac{\partial{\phi_i^p}}{\partial{\bm{\delta}^p_i}}} B_i^p \bm{\zeta}_{i|a}^p \!+\!  {\frac{\partial{\phi_i^p}}{\partial{\bm{\delta}^p_i}}} B_i^p \tanh^2 (\hat{D}_i^p) \bm{\zeta}_{i|a}^p \!+\! Y_i {\hat{W}}_{i|a}^p
\right)
\end{aligned}
\end{equation}
where $\beta_i $ denotes a positive tuning parameter, $Y_i$ is a design parameter to assure stabilit.

The tuning law of the actor NN for evader $j$ is 
\begin{equation}{\label{eq58}}
\begin{aligned}
\dot{\hat{W}}_{j|a}^e \!=\! & - \beta_j \left( {\frac{\partial{\phi_j^e}}{\partial{\bm{\delta}^e_j}}} B_j^e \bm{\zeta}_{j|a}^e \!+\!  {\frac{\partial{\phi_j^e}}{\partial{\bm{\delta}^e_j}}} B_j^e \tanh^2 (\hat{D}_j^e) \bm{\zeta}_{j|a}^e \!+\! Y_j {\hat{W}}_{j|a}^e
\right)
\end{aligned}
\end{equation}
where $\beta_j $ denotes a positive tuning parameter, $Y_j$ is a design parameter to assure stability.

\begin{theorem}
\label{theorem3}
Consider the input-constrained MPE games with the pursuers' local error dynamics described \eqref{eq3} and the evaders' local error dynamics given by \eqref{eq5}.
Let the critic NN of each pursuer is described by \eqref{eq42}, and the the control input generated by actor NN is given by \eqref{eq47}, if the tuning laws for critic NN and actor NN are given by 
\eqref{eq53} and \eqref{eq57}, then the approximating error of both the critic NN and actor NN are uniformly ultimately bounded. Similarly, let each evader adopts the critic NN and actor NN given by \eqref{eq45} and \eqref{eq50}, with the tuning law given by \eqref{eq55} and \eqref{eq58}, then the approximating errors are also uniformly ultimately bounded.
\end{theorem}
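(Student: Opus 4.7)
My plan is to run a standard multi-loop Lyapunov argument on the joint critic--actor weight estimation errors, aggregated across all pursuers and evaders. Define the critic and actor weight estimation errors $\tilde{W}_{i|c}^{p} \triangleq W_i^p - \hat{W}_{i|c}^{p}$, $\tilde{W}_{i|a}^{p} \triangleq W_i^p - \hat{W}_{i|a}^{p}$ (and analogously for evaders). Using \eqref{eq41} in the HJI equation \eqref{eq29} and comparing with \eqref{eq43}, the Bellman residual $\zeta_{i|c}^p$ can be rewritten as $\zeta_{i|c}^p = -(\sigma_{i|c}^p)^{\mathrm T}\tilde{W}_{i|c}^{p} + \varepsilon_{H,i}^p$, where $\varepsilon_{H,i}^p$ lumps together the NN reconstruction error $\varepsilon_i^p$, its gradient error, and the input-reconstruction error introduced by plugging $\hat{\mathbi u}_{i|a}$ in place of $\mathbi u_i^*$. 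Under the standard persistence-of-excitation (PE) condition on the regressor $\sigma_{i|c}^p / (1+(\sigma_{i|c}^p)^{\mathrm T}\sigma_{i|c}^p)$ and boundedness of the basis functions and their gradients on the compact set of interest, the critic tuning law \eqref{eq53} then produces error dynamics of the form $\dot{\tilde W}_{i|c}^p = -\alpha_i \bar\sigma_{i|c}^p (\bar\sigma_{i|c}^p)^{\mathrm T} \tilde W_{i|c}^p + \alpha_i \bar\sigma_{i|c}^p\,\varepsilon_{H,i}^p / (1+(\sigma_{i|c}^p)^{\mathrm T}\sigma_{i|c}^p)$ with a uniformly positive-definite quadratic part.

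I would then combine these with the actor error $\bm\zeta_{i|a}^p$ from \eqref{eq48}; by expanding the tanh difference and exploiting its global Lipschitz constant, one gets $\|\bm\zeta_{i|a}^p\| \le L_u \|\tilde W_{i|c}^p - \tilde W_{i|a}^p\|$ times bounded factors involving $\partial\phi_i^p/\partial\bm\delta_i^p$ and $B_i^p$. The actor tuning law \eqref{eq57} contains three pieces: the two data-driven terms that pull $\hat W_{i|a}^p$ toward $\hat W_{i|c}^p$ through the tanh-weighted gradient, and the Tikhonov-type $-\beta_i Y_i \hat W_{i|a}^p$ term which is the only one providing direct decay on $\|\tilde W_{i|a}^p\|$ when the regressor is not PE.

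The Lyapunov candidate I would take is
\begin{equation*}
L = \sum_{i\in\mathcal N^p}\!\Bigl(V_i^p + \tfrac{1}{2\alpha_i}\tilde W_{i|c}^{p\,\mathrm T}\tilde W_{i|c}^p + \tfrac{1}{2\beta_i}\tilde W_{i|a}^{p\,\mathrm T}\tilde W_{i|a}^p\Bigr) + \sum_{j\in\mathcal N^e}\!\Bigl(-V_j^e + \tfrac{1}{2\alpha_j}\tilde W_{j|c}^{e\,\mathrm T}\tilde W_{j|c}^e + \tfrac{1}{2\beta_j}\tilde W_{j|a}^{e\,\mathrm T}\tilde W_{j|a}^e\Bigr),
\end{equation*}
with the sign flip on the evader value function reflecting its max-player role. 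Differentiating along \eqref{eq4}, \eqref{eq6}, \eqref{eq53}, \eqref{eq55}, \eqref{eq57}, \eqref{eq58} and substituting the HJI identities \eqref{eq29}, \eqref{eq30} to cancel the drift contributions, I expect to obtain a bound of the form
\begin{equation*}
\dot L \;\le\; -\sum_i \lambda_{\mathrm{min}}(\tilde Q_i^p)\,\|\bm\delta_i^p\|^2 - \sum_i \kappa_{c,i}\|\tilde W_{i|c}^p\|^2 - \sum_i \kappa_{a,i}\|\tilde W_{i|a}^p\|^2 + \text{(cross and mismatch terms)} + \Delta,
\end{equation*}
where $\Delta$ collects the bounded residuals generated by $\varepsilon_i^p$, $\varepsilon_j^e$, their gradients, and the $Y_i\,W_i^p$ biases. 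Standard Young-inequality splitting will absorb the cross terms into the quadratic ones provided $Y_i$, $Y_j$, $\alpha_i$, $\beta_i$ are chosen to dominate the Lipschitz constants of the tanh-gradient coupling. This gives $\dot L \le -c\|z\|^2 + \Delta$ with $z$ stacking all weight and state errors, hence UUB by the standard Lyapunov extension theorem.

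The hard part, in my view, is controlling the genuine multi-agent coupling: the regressors $\sigma_{i|c}^p$ depend on neighbors' actor weights $\hat W_{k|a}^p$ and $\hat W_{j|a}^e$ through $\hat{\mathbi u}_{k|a}$ and $\hat{\mathbi v}_{j|a}$, so the critic error for player $i$ couples to actor errors of every neighbor, and likewise inside $\bm\zeta_{i|a}^p$. The clean separation of $-\kappa\|\tilde W\|^2$ terms requires either a simultaneous PE condition on all players' regressors, or a sufficiently large $Y_\bullet$ gain to dominate every cross-agent Lipschitz coefficient; I would state this as an explicit gain inequality in the proof. Once that is in place, the input-saturation pieces (the tanh$^2(\hat D)$ factors in \eqref{eq57}, \eqref{eq58}, the $u_{\max}^2\bar R_i^p\ln(\underbar{\mathbi 1}-\tanh^2)$ contributions in \eqref{eq27}, \eqref{eq28}) are uniformly bounded on the compact operating set and contribute only to $\Delta$, so they do not disrupt the argument. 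This is exactly the structure of the single-agent input-constrained ADP result invoked in Theorem~\ref{theorem2}, extended here to the cooperative--noncooperative graph-coupled case.
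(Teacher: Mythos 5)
Your proposal takes essentially the same route as the paper, which in fact gives no argument of its own beyond the one-line statement that ``the proof is similar with \cite{ref1}'' --- i.e., it defers entirely to the standard Vamvoudakis--Lewis actor--critic UUB argument, whose structure (weight-error dynamics from the normalized gradient tuning laws, a Lyapunov candidate summing the value functions and the scaled critic/actor weight errors, PE on the normalized regressor, and Young-inequality absorption of cross terms into a residual bound $\Delta$) is exactly what you have sketched. If anything, your write-up is more honest than the paper's: you correctly identify the genuinely new difficulty in this graph-coupled, cooperative--noncooperative setting --- that each $\sigma_{i|c}^p$ and $\bm{\zeta}_{i|a}^p$ depends on the neighbors' actor weights through $\hat{\mathbi{u}}_{k|a}$ and $\hat{\mathbi{v}}_{j|a}$, so the claimed UUB requires either a joint PE condition or an explicit gain inequality on $Y_i$, $Y_j$ dominating the cross-agent Lipschitz coefficients --- a condition the paper never states and which a citation to the single-agent or fully-connected result does not by itself supply.
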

\begin{proof}
The proof is similar with \cite{ref1}.$\hfill\blacksquare$
\end{proof}

\begin{theorem}[Convergence to Zero-Sum Nash Equilibrium]
Suppose that Theorem \ref{theorem3} holds, then 

(i) $H_i^p \left( \left({\hat{W}_{i|c}^p}\right)^{\mathrm{T}} \phi_i^p, \bm{\delta}_i^p, \bm{\delta}_{\mathcal{G}_p-i}^p, \bm{\delta}_{\mathcal{G}_{pe}-i}^e,  \hat{ \mathbi{u}_i },  \hat{\mathbi{u}}_{\mathcal{G}_p-i}, \hat{\mathbi{v}}_{\mathcal{G}_{pe}-i}  \right)$ converges to the approximate solution for HJI equation \eqref{eq29}
$H_j^e \left( \left({\hat{W}_{j|c}^e}\right)^{\mathrm{T}} \phi_j^e, \bm{\delta}_j^e, \bm{\delta}_{\mathcal{G}_e-j}^e, \bm{\delta}_{\mathcal{G}_{pe}-j}^p,  { \hat{\mathbi{v}}_j },  \hat{{\mathbi{v}}}_{\mathcal{G}_e-j}, \hat{\mathbi{u}}_{\mathcal{G}_{pe}-j}  \right) $ converges to the approximate solution for HJI equation \eqref{eq30}.

(ii) All the control policies converge to the approximate global Nash equilibrium.
\end{theorem}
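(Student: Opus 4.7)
The plan is to reduce both claims to the uniform ultimate boundedness (UUB) of the critic and actor weight estimation errors that Theorem~\ref{theorem3} already delivers, and then propagate those bounds through the HJI residuals and the control laws.

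First, I would introduce the weight estimation errors $\tilde{W}_{i|c}^p = W_i^p - \hat{W}_{i|c}^p$, $\tilde{W}_{i|a}^p = W_i^p - \hat{W}_{i|a}^p$ (and analogously for each evader). By Theorem~\ref{theorem3}, there exist constants $\bar{\varepsilon}_{i|c}^p, \bar{\varepsilon}_{i|a}^p$ such that, for $t$ large enough, $\|\tilde{W}_{i|c}^p\| \leq \bar{\varepsilon}_{i|c}^p$ and $\|\tilde{W}_{i|a}^p\| \leq \bar{\varepsilon}_{i|a}^p$, together with the neural-network basis approximation residuals $\varepsilon_i^p$ appearing in \eqref{eq41}. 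I would then substitute $\hat{V}_i^p = (\hat{W}_{i|c}^p)^{\mathrm{T}}\phi_i^p$ and $\hat{\bm u}_i$ from \eqref{eq47} into the Hamiltonian \eqref{eq21} and compare with the exact HJI identity \eqref{eq29} evaluated at $(W_i^p,\bm u_i^*,\bm u_{\mathcal{G}_p-i}^*,\bm v_{\mathcal{G}_{pe}-i}^*) $, which equals zero.

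For part (i), the difference $H_i^p(\hat{W}_{i|c}^p{}^{\mathrm{T}}\phi_i^p,\cdot)-H_i^p(W_i^p{}^{\mathrm{T}}\phi_i^p,\cdot)$ is linear in $\tilde{W}_{i|c}^p$ via the drift term $(\tilde{W}_{i|c}^p)^{\mathrm{T}}\nabla\phi_i^p \bar A\bm\delta_i^p$, while the differences of the energy terms $U(\hat{\bm u}_i)-U(\bm u_i^*)$ and the neighbour terms $U(\hat{\bm u}_k)-U(\bm u_k^*)$, $U(\hat{\bm v}_j)-U(\bm v_j^*)$ are smooth functions of $\tanh$-arguments that are themselves linear in the actor weight errors. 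Using the global Lipschitz constant of $\tanh$ on $\mathbb{R}^m$ together with $|\tanh^{-1}(\cdot)|$ being bounded on the saturation set, I would show $|H_i^p(\hat{W}_{i|c}^p{}^{\mathrm{T}}\phi_i^p,\cdot)| \leq \kappa_{i,1}\|\tilde{W}_{i|c}^p\|+\kappa_{i,2}\|\tilde{W}_{i|a}^p\| + \sum_{k}\kappa_{i,3}^k\|\tilde{W}_{k|a}^p\|+\sum_{j}\kappa_{i,4}^j\|\tilde{W}_{j|a}^e\| + \kappa_{i,5}\|\varepsilon_i^p\|_{C^1}$ on compact sets; the right-hand side is UUB by Theorem~\ref{theorem3}, which establishes that the approximate HJI residual stays in a residual set around zero. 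The identical argument applied to \eqref{eq30} handles each evader.

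For part (ii), writing the optimal and approximate control policies \eqref{eq25} and \eqref{eq47} as $\bm u_i^*=-u_{\max}\tanh(D_i^{p*})$ and $\hat{\bm u}_i=-u_{\max}\tanh(\hat D_i^{p})$, with $D_i^{p*}-\hat D_i^{p}=\tfrac{1}{2u_{\max}}{R_i^p}^{-1}{\bar B_i^p}^{\mathrm{T}}\nabla\phi_i^p{}^{\mathrm{T}}(W_i^p-\hat W_{i|a}^p)+O(\nabla\varepsilon_i^p)$, the Lipschitz bound $\|\tanh(x)-\tanh(y)\|\leq\|x-y\|$ gives $\|\hat{\bm u}_i-\bm u_i^*\|\leq C_i(\|\tilde W_{i|a}^p\|+\|\nabla\varepsilon_i^p\|)$, and similarly for each $\hat{\bm v}_j$. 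Substituting these bounds into the value-function representation \eqref{eq35}--\eqref{eq36}, and using that the integrand there is the quadratic-like excess $\int_{\bm u_i^*}^{\bm u_i}(u_{\max}\tanh^{-1}(\bm\nu/u_{\max}))^{\mathrm{T}}R_i^p\mathrm{d}\bm\nu$ which is semi-positive and vanishes exactly at $\bm u_i=\bm u_i^*$, I would conclude that $J_i^p(\hat{\bm u}_i,\hat{\bm u}_{\mathcal{G}_p-i},\hat{\bm v}_{\mathcal{G}_{pe}-i})-J_i^p(\bm u_i^*,\bm u_{\mathcal{G}_p-i}^*,\bm v_{\mathcal{G}_{pe}-i}^*)=O(\bar{\varepsilon}_{\cdot|a}^{\cdot}+\|\nabla\varepsilon_\cdot^\cdot\|)$, so the approximate Nash inequalities of Definition~\ref{definition1} hold up to a uniformly bounded residual.

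The main obstacle will be handling the coupling between the critic and actor errors across different players cleanly: each player's residual HJI depends on the other players' approximate controls (hence on their actor weight errors), and the Lyapunov argument behind Theorem~\ref{theorem3} must be read carefully to extract a joint UUB bound that is independent of the order in which players' NNs are examined. Once a common compact invariant set on which the basis functions and their gradients are uniformly bounded is fixed, the rest of the estimates become routine Lipschitz bookkeeping.
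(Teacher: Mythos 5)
Your proposal is far more substantive than what the paper actually provides: the paper's entire ``proof'' of this theorem is the single sentence ``The proof is similar with [1]'' (Vamvoudakis--Lewis 2011), so there is no argument in the paper to compare against step by step. What you have written is, in outline, exactly the argument that the cited reference develops for the unconstrained multi-player case: use the uniform ultimate boundedness of the critic and actor weight errors from Theorem~\ref{theorem3}, propagate those bounds through the Hamiltonian to show the HJI residual lands in a small set around zero, and then push the actor-error bounds through the policy expressions and the completed-square form of the cost (here \eqref{eq35}--\eqref{eq36}) to get the approximate Nash inequalities. Your adaptation to the constrained-input setting (Lipschitz bound on $\tanh$, boundedness of $\tanh^{-1}$ on the saturation set, the integral energy term $U(\cdot)$ in place of the quadratic) is the right set of modifications, and the coupling difficulty you flag is the genuine technical obstacle that the reference handles with a joint Lyapunov function over all players' weight errors.

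Two points deserve care if you were to write this out in full. First, in part (ii) the approximate Nash property of Definition~\ref{definition1} concerns unilateral deviations: you must compare $J_i^p(\bm u_i,\hat{\bm u}_{\mathcal{G}_p-i},\hat{\bm v}_{\mathcal{G}_{pe}-i})$ against $J_i^p(\hat{\bm u}_i,\hat{\bm u}_{\mathcal{G}_p-i},\hat{\bm v}_{\mathcal{G}_{pe}-i})$ for arbitrary admissible $\bm u_i$, not merely show that the tuple of approximate policies achieves a cost close to the exact equilibrium cost; the semi-positivity of $\int_{\bm u_i^*}^{\bm u_i}(u_{\max}\tanh^{-1}(\bm\nu/u_{\max}))^{\mathrm{T}}R_i^p\,\mathrm{d}\bm\nu$ in \eqref{eq35} does give you this, but only after you re-derive \eqref{eq35} with the neighbours frozen at their \emph{approximate} policies, which introduces additional residual terms of the same order. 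Second, the paper's cost functionals \eqref{eq17} and \eqref{eq19} are finite-horizon with terminal penalties, so the value functions and hence the ideal NN weights are time-varying; the UUB statement of Theorem~\ref{theorem3} (and the ``convergence'' language of the present theorem) is an infinite-horizon notion and does not transfer cleanly to $[0,t_f]$. That mismatch is inherited from the paper rather than introduced by you, but any complete proof would have to either work with time-dependent basis functions or restrict to the infinite-horizon limit.
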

\begin{proof}
The proof is similar with \cite{ref1}.$\hfill\blacksquare$
\end{proof}

\section{Rolling Horizon Target Selection and Capture Analysis}
\label{sec5}
Consider the capture conditions, which lead to the local dynamics of pursuers are stable.
In \eqref{eq32}, the condition for pursuer $i$ is given by letting the summation of control energy of pursuer $i$ and his neighboring cooperative teammates is larger than his neighboring opponents.
It should be noted that such condition in existing literature is satisfied by the selection of the positive-definite control weight matrix $R$ in a quadratic energy form such that $\mathbi{u}^{\mathrm{T}} R \mathbi{u}$.
Through letting the control weight matrix of the evader is larger than his pursuer, then the capture could occurs, which means the pursuer puts less attention on energy saving.
However, the input constraints are not considered.
Due to the integral form of control energy in \eqref{eq11}, the capture conditions could not be simply described by the selection of control weight matrix.

Besides, the capture conditions restrict the maneuver ability of pursuers is bigger than the evaders, ignoring that multiple players with poor mobility can capture fewer players with strong mobility.
Such assumption is overcame is this paper, and the capture condition of multiple player with poor mobility pursuit fewer players with strong mobility is studied.
\subsection{Bi-layer topology and rolling horizon target selection}
In this paper, a bi-layer topology is proposed, in which the first layer contains the communication information, and the second layer called game topology guarantees the capture can be achieved.
Besides, the time element in the finite-time games is novel considered in the game topology, i.e., as time converges the setting terminal, the pursuers discard the evaders that are difficult to pursuit, and select the evaders that the capture is easier to be achieved.

Based on the above description and Section \ref{Graph Theory}, the topology are adjusted by replacing each edge in $\mathcal{G}_{pe}$ to $\tilde{c}^{pe}_{ij} = c_{ij}^{pe} \times g_{ij}^{pe}$, where $g^p_{ik} = 0,1$ is the game weight to guarantee the capture is satisfied. Note that the weights in $\mathcal{G}^{ep}$ is not changed.
The following Algorithm \ref{algorithm2} shows the setting law of the weight in the bi-layer topology.

\begin{algorithm}[t]
	\caption{Adjusting law of weights in bi-layer topology.}
	\label{algorithm2}
	For pursuer $i$, initialize the communication weights, initialize all the weights in the game topology as $1$, initialize the evader set $\Theta_i^{kT}$ involving all his neighboring evaders;
	
	\While{$t \leq t_f$}{
				
		\ForEach{$t \in [kT, (k+1)T]$, $T$ is given time interval}{
			\emph{Step 1 (Reachable domain):} Based on the dynamics \eqref{eq1} for each pursuer, using the maximum control input $u_{\max}$, calculate the reachable domain $\Sigma_i^{kT}$ during the interval $[kT, t_f]$;
			
			\emph{Step 2 (Update the evader set $\Theta_i^{kT}$):} Exclude the neighboring evaders outside of reachable domain $\Sigma_i^{kT}$, and set the game weights with all evaders outside of  $\Theta_i^{kT}$ as $0$;
			
			\emph{Step 3 (Capture guarantee):} According to the relative distance with pursuer $i$, \ForEach{evader $j$ in $\Theta_i^{kT}$ (evader has shorter relative distance is preferential)}{calculate
			\begin{equation}\label{eq59}
			U(\mathbi{u}_i) + \sum\limits_{k \in \mathcal{N}_{-i}^p} c^{p}_{ik} U(\mathbi{u}_k) - \sum\limits_{j \in \Theta_i^{kT}} \tilde{c}^{pe}_{ij} U(\mathbi{v}_j)
			\end{equation}
			until
			\begin{equation}\label{eq60}
			\begin{aligned}
			U(\mathbi{u}_i) + \sum\limits_{k \in \mathcal{N}_{-i}^p} c^{p}_{ik} U(\mathbi{u}_k) - \sum\limits_{j \in \Theta_i^{kT}} \tilde{c}^{pe}_{ij} U(\mathbi{v}_j) \\
			\leq \chi, \ \chi>0
			\end{aligned}
			\end{equation}
			}
			set the game weight with each evader (not calculated in \eqref{eq60}) as $0$;
			
					}
		Break this procedure once pursuer $i$ only has one neighboring evader.
	}

\end{algorithm}

According to Algorithm \ref{algorithm2}, given a time interval $T$, during each $[kT, (k+1)T]$ the weights in the game topology is set as constant, the rolling horizon scheme is conducted during each interval.
The target selection is based on the idea of reachable domain, i.e., according to the agent's dynamics and the upper bound of input saturation, the reachable domain during a finite time interval can be calculated.
Through judging whether the neighboring evaders is in the reachable domain, the weights of edges relative to the evaders outside the domain in the game topology are set as zero, which means the pursuer discards the evaders that can not achieve capture.
Note that, with the time converge to the terminal, the reachable domain is contractive, which means the pursuer put more efforts to the evaders that are easier to capture.
Besides, another function of the game topology is to guarantee the capture is achieved in real-time.
Based on the capture condition \eqref{eq32}, the basic idea is to select the nearest evaders through iteration to keep \eqref{eq60} is satisfied where $\chi$ is a positive constant to guarantee finite-time capture.
Note that once each player only has one neighboring evader with the iteration of target selection algorithm, the procedure is automatically complete.

\subsection{Finite-time capture analysis}
\begin{theorem}[Finite-time capture]
Consider the input-constrained MPE games with the pursuers' local error dynamics described \eqref{eq3} and the evaders' local error dynamics given by \eqref{eq5}.
The control policies of pursuer $i$ and evader $j$ are given by \eqref{eq25} and \eqref{eq26}, if the each pursuer adopts the target selection method presented in Algorithm \ref{algorithm2}, then the finite-time capture is guaranteed.
\end{theorem}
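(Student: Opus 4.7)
My plan is to combine the Lyapunov argument used in Theorem~1 with the target-selection logic of Algorithm~\ref{algorithm2}, obtain a uniform strict-decrease inequality on each pursuer's optimal value function, and integrate it to bound the capture time from above.

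First, I would take $V_i^p$ as the Lyapunov candidate and re-derive its time derivative along the closed-loop trajectories generated by the optimal policies \eqref{eq25}--\eqref{eq26}, exactly as in \eqref{eq31}, but with the original adjacency weights $c^{pe}_{ij}$ replaced by the game-adjusted weights $\tilde c^{pe}_{ij}=c^{pe}_{ij}\,g^{pe}_{ij}$ produced by Algorithm~\ref{algorithm2}. Because $\tilde Q_i^p$ is positive-definite by the design condition following \eqref{eq17}, the teammate/adversary cross terms are non-negative by the sign assumption stated just after \eqref{eq16}, and $\rho_i>0$, the derivative admits the strict bound
\[
\dot V_i^p \;\le\; -\rho_i \;-\; \Big[U(\mathbi u_i^*) + \!\!\sum_{k\in\mathcal N_{-i}^p}\!\! c^p_{ik}U(\mathbi u_k^*) - \!\!\sum_{j\in\Theta_i^{kT}}\!\!\tilde c^{pe}_{ij}U(\mathbi v_j^*)\Big].
\]
On each rolling window $[kT,(k+1)T]$, the Step~3 iteration of Algorithm~\ref{algorithm2} selects the targeted evader subset $\Theta_i^{kT}$ so that the stopping criterion \eqref{eq60} holds while the bracketed energy balance remains non-negative (the iteration halts as soon as the pursuer's energy just matches the opponents' energy sum from above, with a designer-chosen margin $\chi>0$). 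Therefore the bracket above is non-negative, and one obtains $\dot V_i^p \le -\rho_i$ uniformly on $[0,t_f]$. Integrating yields $V_i^p(t)\le V_i^p(0)-\rho_i\,t$, which forces $V_i^p$ to hit $0$ at some $t_i^\star\le V_i^p(0)/\rho_i$; since $V_i^p$ is a positive-definite function of $\bm\delta_i^p$, this is exactly the finite-time capture event. Tuning $\rho_i$ (and, through $R_i^p,\Gamma_i^p,u_{\max}$, the absolute margin in \eqref{eq60}) makes $t_i^\star\le t_f$ hold by design.

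The main obstacle will be the switching behavior at $t=kT$: the game weights $g^{pe}_{ij}$ change discretely, so $V_i^p$ is only piecewise-smooth and the decrease inequality could in principle depend on $k$. I would address this by (i) using Step~1--2 of Algorithm~\ref{algorithm2} to argue that the active target set $\Theta_i^{kT}$ is monotonically non-increasing in $k$---evaders dropped for leaving the shrinking reachable domain never re-enter, so any jump in $V_i^p$ at a switching instant is downward (fewer positive teammate/opponent terms in the terminal weighting); (ii) taking $\chi$ as the infimum of per-interval margins, which is strictly positive because the family of admissible subset configurations is finite and Algorithm~\ref{algorithm2} picks each configuration with a uniform tolerance; and (iii) handling the multi-agent coupling by applying the same inequality simultaneously to every $i\in\mathcal N^p$, so that no pursuer's decay argument relies on another's state remaining bounded. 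The algorithm's explicit termination clause---the procedure halts once each pursuer has a single neighboring evader---then guarantees that the multi-target game eventually reduces to a collection of one-on-one sub-games, for each of which the capture statement of Theorem~1 directly applies with the same Lyapunov bound, completing the proof.
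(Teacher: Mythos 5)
Your proposal follows essentially the same route as the paper's own proof: take $V_i^p$ as a Lyapunov candidate, reuse the derivative computation \eqref{eq31}, invoke the energy-balance condition enforced by Algorithm~\ref{algorithm2} to bound $\dot V_i^p$ below zero by a strict margin, and conclude finite-time capture with a capture time tunable through $\chi$ (and $\rho_i$). The only substantive difference is that you are more careful than the paper about the switching instants $t=kT$, the uniformity of the margin across windows, and the explicit integration step; note also that both your argument and the paper's implicitly read the stopping criterion \eqref{eq60} as a \emph{lower} bound on the energy balance (the inequality as printed points the wrong way for the proof to go through).
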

\begin{proof}
For pursuer $i$, the capture condition is presented in \eqref{eq32}, i.e.,
\begin{equation*}
U(\mathbi{u}_i) + \sum\limits_{k \in \mathcal{N}_{-i}^p} c^{p}_{ik} U(\mathbi{u}_k) - \sum\limits_{j \in \mathcal{N}^e} c^{pe}_{ij} U(\mathbi{v}_j) >0.
\end{equation*}
According to \eqref{eq60}, if pursuer $i$ adopts the rolling horizon target selection method, then the time derivative of the optimal value function is described by
 \begin{equation}
\begin{aligned}
\dot{V}_i^p 
                  =&  -{\bm{\delta}^p_i}^{\mathrm{T}} \tilde{Q}^{p}_{i} {\bm{\delta}^{p}_i} \!-\! {\bm{\delta}^p_i}^{\mathrm{T}}  \! \left( \sum\limits_{k \in \mathcal{N}_{-i}^p} \! c^{p}_{ik} Q_{ik}^p {\bm{\delta}^p_k}
  \!+\!  \sum\limits_{j \in \mathcal{N}^e} \!c^{pe}_{ij} Q_{ij}^{pe}  {\bm{\delta}^{e}_j}
\!  \right) \\
 & - U(\mathbi{u}^*_i) - \sum\limits_{k \in \mathcal{N}_{-i}^p} c^{p}_{ik} U(\mathbi{u}^*_k) + \sum\limits_{j \in \mathcal{N}^e} c^{pe}_{ij} U(\mathbi{v}^*_j) - \rho_i\\
 \leq &   -{\bm{\delta}^p_i}^{\mathrm{T}} \tilde{Q}^{p}_{i} {\bm{\delta}^{p}_i} \!-\! {\bm{\delta}^p_i}^{\mathrm{T}}  \! \left( \sum\limits_{k \in \mathcal{N}_{-i}^p} \! c^{p}_{ik} Q_{ik}^p {\bm{\delta}^p_k}
  \!+\!  \sum\limits_{j \in \mathcal{N}^e} \!c^{pe}_{ij} Q_{ij}^{pe}  {\bm{\delta}^{e}_j}
\!  \right) \\
& - \chi -\rho_i
\end{aligned}
\end{equation}
Thus it can be concluded that $\dot{V}_i^p$ is less than a negative parameter, and the capture time could be adjusted by the parameter $\chi$.$\hfill\blacksquare$
\end{proof}

\section{Conclusion}
In this paper, the input-constrained MPE games with time-energy optimality is formulated.
By introducing the altruism terms, a slightly altruistic global Nash equilibrium is proposed to constitute a cooperative-noncooperative MPE games framework.
To guarantee the capture conditions for each pursuer, the rolling horizon target selection approach with a bi-layer topology is proposed.
The proposed methods avoids the assumption that the the evader must have the poor mobility.
Detailed theoretical analysis prove the effectiveness of the proposed methods.
\vfill

\end{document}